\documentclass [11pt]{amsart}
\usepackage{amsmath}
\usepackage{amssymb}
\usepackage{amscd}
\usepackage{mathabx}
\usepackage{tikz}
\usetikzlibrary{matrix}

\usepackage{scalerel,stackengine}
\usepackage{graphicx}

\stackMath
\newcommand\reallywidehat[1]{%
\savestack{\tmpbox}{\stretchto{%
  \scaleto{%
    \scalerel*[\widthof{\ensuremath{#1}}]{\kern-.6pt\bigwedge\kern-.6pt}%
    {\rule[-\textheight/2]{1ex}{\textheight}}
  }{\textheight}%
}{0.5ex}}%
\stackon[1pt]{#1}{\tmpbox}%
}

\newcommand\reallywidecheck[1]{%
\savestack{\tmpbox}{\stretchto{%
  \scaleto{%
    \scalerel*[\widthof{\ensuremath{#1}}]{\kern-.6pt\bigwedge\kern-.6pt}%
    {\rule[-\textheight/2]{1ex}{\textheight}}
  }{\textheight}%
}{0.5ex}}%
\stackon[1pt]{#1}{\scalebox{-1}{\tmpbox}}%
}

\allowdisplaybreaks

\newcommand{\oplam}{\mbox{\LARGE $\curlywedge$}}

\numberwithin{equation}{section}


\newcommand{\supp}{\mbox{\rm supp}}
\newcommand{\dens}{\mbox{\rm dens}}

\newcommand{\RR}{{\mathbb R}}

\newcommand{\ZZ}{{\mathbb Z}}
\newcommand{\CC}{{\mathbb C}}

\newcommand{\KK}{\mathbb K}

\newcommand{\cL}{{\mathcal L}}

\newcommand{\cK}{{\mathcal K}}
\newcommand{\cM}{{\mathcal M}}
\newcommand{\cS}{{\mathcal S}}

\newcommand{\dd}{\mbox{\rm d}}

\newcommand{\Cu}{C_{\mathsf{u}}}
\newcommand{\Cc}{C_{\mathsf{c}}}
\newcommand{\Cz}{C^{}_{0}}

\newcommand{\SAP}{\mathcal{SAP}}



\newtheorem{lemma}{Lemma}[section]
\newtheorem{theorem}[lemma]{Theorem}
\newtheorem{proposition}[lemma]{Proposition}
\newtheorem{corollary}[lemma]{Corollary}
\newtheorem{conjecture}[lemma]{Conjecture}

\theoremstyle{definition}
\newtheorem{definition}[lemma]{Definition}

\newtheorem{remark}{Remark}

\begin{document}

\title[Fourier Analysis on Meyer Sets]{On the Fourier Analysis of Measures with Meyer Set Support}

\author{Nicolae Strungaru}
\address{Department of Mathematical Sciences, MacEwan University \\
10700 -- 104 Avenue, Edmonton, AB, T5J 4S2\\
Phone: 780-633-3440 \\
and \\
Institute of Mathematics ``Simon Stoilow''\\
Bucharest, Romania}
\email{strungarun@macewan.ca}
\urladdr{http://academic.macewan.ca/strungarun/}

\begin{abstract} In this paper we show the existence of the generalized Eberlein decomposition for Fourier transformable measures with Meyer set support. We prove that each of the three components is also Fourier transformable and has Meyer set support. We obtain that each of the pure point, absolutely continuous and singular continuous components of the Fourier transform is a strong almost periodic measure, and hence is either trivial or has relatively dense support. We next prove that the Fourier transform of a measure with Meyer set support is norm almost periodic, and hence so is each of the pure point, absolutely continuous and singular continuous components. We show that a measure with Meyer set support is Fourier transformable if and only if it is a linear combination of positive definite measures, which can be chosen with Meyer set support, solving a particular case of an open problem. We complete the paper by discussing some applications to the diffraction of weighted Dirac combs with Meyer set support.
\end{abstract}

\maketitle

\section{Introduction}

The discovery of quasicrystals in the 1980's \cite{She} emphasized the need for a better understanding of the mathematics behind the physical process of diffraction and of aperiodic order in general.

A large class of mathematical models for quasicrystals is represented by Meyer sets and weighted Dirac combs with Meyer set support. It was believed that the Meyer property was the reason behind the unusual diffraction spectrum of quasicrystals \cite{BCG}, and this was proven to be the case: indeed Meyer sets show a large Bragg diffraction spectrum which is highly ordered \cite{NS1,NS2,NS11}.

The pure point Bragg spectrum of Meyer sets is now pretty well understood \cite{JBA,BM,MS,NS1,NS2,NS2,NS5,NS11}. We also know a little about the Bragg continuous spectrum \cite{JBA,NS1,NS5,NS11} but nothing is known in general about the absolutely continuous and singular continuous Bragg spectrum of Meyer sets. It is our goal in this project to study these two components of the diffraction spectrum of Meyer sets via a systematic study of the Fourier analysis of measures with Meyer set support.

\smallskip

As it was introduced by Hof \cite{HOF3}, the physical diffraction of a solid can be viewed as the Fourier transform $\widehat{\gamma}$ of the autocorrelation measure $\gamma$ of the structure. The measure $\gamma$ is positive definite, and often also positive, and therefore it is Fourier transformable as a measure \cite{BF,ARMA,MoSt}, and its Fourier transform $\widehat{\gamma}$ is a positive measure. As any measure, $\widehat{\gamma}$ has a Lebesgue decomposition
\begin{displaymath}
\widehat{\gamma}=\left(\widehat{\gamma} \right)_{pp}+\left(\widehat{\gamma} \right)_{ac}+\left(\widehat{\gamma} \right)_{sc}
\end{displaymath}
into a pure point, absolutely continuous and singular continuous component. We will refer to these three components as the \textbf{pure point, absolutely continuous} and \textbf{singular continuous diffraction spectra}, respectively, of our structure.

Structures with pure point diffraction, that is structures for which the absolutely and singular continuous diffraction spectra are absent, are now relatively well understood (see for example \cite{TAO,BHS,BL,BL2,BM,DL03,LS,LR,LO,LO1,Meyer1,MOO,NS11}). Of particular interest among pure point diffractive point sets are model sets, which are constructed by projecting points in a strip from a higher dimensional lattice (see Definition~\ref{CPS} below for the exact definition). If the window which produces the strip is regular, then the model set is pure point diffractive \cite{Hof2,Martin2,BM,CRS}. It was recently shown that the regularity of the window can be replaced by a weaker natural condition on the density of the pointset \cite{BHS,KR,KR2}. Recent work \cite{TAO,CRS} has also shown that the diffraction formula for (regular) model sets is just the Poisson summation formula for the underlying lattice, an idea which seems to have been first suggested by Lagarias (see \cite[Page 9]{BG}). This emphasizes that the long range order exhibited by model sets is a consequence of the periodicity of the lattice in the .

Moving beyond the pure point case, not too much is known in general. For 1-dimensional substitution tilings, some recent progress towards understanding the nature of the diffraction spectrum has been made \cite{BaGa}, but much still needs to be done.

\smallskip

A large class of point sets with long range order are Meyer sets. They have been introduced by Y. Meyer in \cite{Meyer}, and studied in \cite{JBA,LAG1,MOO,NS11}. They can be characterized via completely different properties: as Delone subsets of model sets, as almost lattices, as Delone sets with a uniform discrete Minkowski difference $\Lambda-\Lambda-\Lambda$ (in $\RR^d$ it suffices for $\Lambda- \Lambda$ to be uniformly discrete), as harmonious sets and as Delone sets with relatively dense sets of $\epsilon$-dual characters (see \cite{NS11} for the full characterisation in second countable LCAG, or \cite{LAG1,Meyer,MOO} for the characterisation for $G=\RR^d$).

As arbitrary subsets of model sets, one should expect Meyer sets to exhibit some of the inherited order: we expect a large pure point (or Bragg) spectrum and potentially some continuous diffraction spectrum, a consequence of the generic randomness introduced by going to arbitrary subsets. This is indeed the case: the diffraction pattern of a Meyer set has a relatively dense set of Bragg peaks \cite{NS1}, which are highly ordered \cite{NS2}. The pure point diffraction measure is a strongly almost periodic measure \cite{NS1}, which is also sup almost periodic \cite{NS2}. The continuous diffraction measure $(\widehat{\gamma})_{\mathsf{c}}=(\widehat{\gamma})_{ac}+(\widehat{\gamma})_{sc}$ is a strong almost periodic measure \cite{NS2}, and hence the continuous diffraction spectrum is either non-existent or has a relatively dense support. All these results have been generalized to weighted Dirac combs with Meyer set support \cite{NS5,NS11}. One would expect that the strong almost periodicity of the continuous spectrum would imply that both the singular continuous and absolutely continuous spectrum are strongly almost periodic, but this is not true. Recently, in \cite{SS}, the authors constructed examples $\mu, \nu \in \SAP(\RR)$ of continuous measures such that $\mu_{ac}$ and $\nu_{sc}$, respectively, are non trivial measures with compact support.

\smallskip
It is our goal in this paper to show that given a measure $\gamma$ with Meyer set support, then $(\widehat{\gamma})_{ac} \in \SAP(G)$ and $(\widehat{\gamma})_{sc}SAP(G)$. We prove this via a systematical study the Fourier transform of measures with Meyer set support. The main issue we face is the enigmatic nature of strong almost periodicity, which is compatible with the Lebesgue decomposition \cite{SS}. To solve this problem we show that the class of Fourier transformable measures with Meyer set support admits a generalized notion of \textbf{Eberlein decomposition} (see \cite{ARMA,MoSt} for definitions and properties): In Theorem~\ref{T2} we show that every Fourier transformable measure $\gamma$ supported inside a Meyer set $\Lambda$ has an unique decomposition
\begin{equation}\label{EQ3}
\gamma= \gamma_{s}+\gamma_{0a}+\gamma_{0s}
\end{equation}
into three Fourier transformable measures, each supported inside a Meyer set, such that
\begin{displaymath}
\reallywidehat{\gamma_{s}}=\left(\widehat{\gamma}\right)_{pp} \,;\, \reallywidehat{\gamma_{0a}}=\left(\widehat{\gamma}\right)_{ac} \,;\, \reallywidehat{\gamma_{0s}}=\left(\widehat{\gamma}\right)_{sc} \,.
\end{displaymath}
We obtain this result by exploiting the Meyer property of the support via an embedding into a model set, and by using the Fourier analysis of certain weighted Dirac combs defined by the underlying cut and project scheme (CPS). It is worth pointing out that the Fourier theory we use for the larger weighed Dirac cobs is a consequence of the Poisson summation formula of the Lattice $\cL$ in the CPS. Therefore, similarly to regular model sets, the existence of the generalized Eberlein decomposition (\ref{EQ3}), as well as all the strong consequences of its existence, are a result of the high long-range order of the lattice $\cL$ in the CPS.

Next we provide an upperbound for the support of each of the three components in the generalized Eberlein decomposition. We prove in Theorem~\ref{T2} that, if $\oplam(W)$ is any closed model set containing the support of $\gamma$, then the generalized Eberlein decomposition of $\gamma$ stays within the class of measures supported inside $\oplam(W)$. This result generalizes some results of \cite{JBA,LS2,CRS3,NS1,NS5,ST}.
This type of decomposition was used to derive purity results for 1-dimensional Pisot substitution tilings (see \cite{BaGa,BG2}).

As an immediate consequence of Theorem~\ref{T2} we get the strong almost periodicity of each of the spectral components $\left(\widehat{\gamma}\right)_{pp} \,,\, \left(\widehat{\gamma}\right)_{ac}$ and $\left(\widehat{\gamma}\right)_{sc}$, respectively. In particular we obtain that each of the spectral components in the diffraction of Meyer sets is either trivial or has relatively dense support.

We continue our Fourier analysis by studying the norm-almost periodicity of the Fourier transform $\widehat{\gamma}$. We prove in Theorem~\ref{diff is NAP} that, if $\gamma$ is Fourier transformable and has Meyer set support, then $\widehat{\gamma}$ is a norm almost periodic measure. In particular we obtain that, under the same conditions, each of the measures $\left(\widehat{\gamma}\right)_{pp} \,,\, \left(\widehat{\gamma}\right)_{ac}$ and $\left(\widehat{\gamma}\right)_{sc}$, respectively, is norm almost periodic. Since norm almost periodicity is a stronger notion than strong almost periodicity, this improves our previous result, as well as some of the results of \cite{NS1,NS2,NS11}.

Next we study the connection between Fourier transformability of a measure with Meyer set support and the class of positive definite measures. It is well known that the space of measures spanned by positive definite measures is a subspace of the space of Fourier transformable measures \cite{ARMA1,BF,MoSt}. An open question posted by Argabright and deLamadrid is asking whether these two spaces are the same. So far, the question was answered only for measures with Lattice support \cite{CRS3}: in this case the two spaces are the same. We show in Theorem~\ref{T4} that the answer is also positive in the case of measures supported inside Meyer sets: a measure $\gamma$ supported inside a Meyer set $\Lambda$ is Fourier transformable if and only if it is a linear combination of positive definite measures. Moreover, we prove that there exist a larger Meyer set $\Gamma \supseteq \Lambda$, which depends on $\Lambda$ but can be chosen independent of $\gamma$, such that, $\gamma$ can be written as a linear combination of positive definite measures supported inside $\Gamma$.

All the results in this paper are then collected in Theorem~\ref{T3}.

\smallskip

We complete the introduction by presenting the general strategy for the proof of three main theorems: Theorem~\ref{T2}, Theorem~\ref{diff is NAP} and Theorem~\ref{T4}.

Consider any fixed Meyer set $\Lambda$. In Proposition~\ref{P1} we construct a twice Fourier transformable measure $\omega$ supported inside a larger Meyer set $\Gamma$ such that $\omega(\{ x \})=1$ for all $x \in \Lambda$ and $\widehat{\omega}$ is pure point. This is done by embedding the Meyer set into a model set, and by picking a nice function which $h$ in the internal space which is $1$ on the window of the model set.

Next, we  show that $\omega(\{ x \})=1$ for all $x \in \Lambda$, and that $\omega$ has uniformly discrete support implies that $\widehat{\omega}$ "dominates" $\widehat{\gamma}$ in the following sense: there exists a finite measure $\nu$ such that
\begin{equation}\label{EQ4}
\widehat{\gamma}=\widehat{\omega}*\nu \,.
\end{equation}
This phenomena is similar to the case of measures with lattice support \cite{CRS3}.

Therefore, using the uniqueness of the Lebesgue decomposition, we get
\begin{equation}\label{EQ2}
\left(\widehat{\gamma}\right)_{pp}=\widehat{\omega}*\nu_{pp} \quad ; \quad
\left(\widehat{\gamma}\right)_{ac}=\widehat{\omega}*\nu_{ac} \quad ; \quad
\left(\widehat{\gamma}\right)_{sc}=\widehat{\omega}*\nu_{sc} \,.
\end{equation}
We can use now the twice Fourier transformability of $\omega$ to show that each of the three measures in (\ref{EQ2}) has an inverse Fourier transform, which is supported inside $\Gamma=\supp(\omega)$, yielding the desired generalized Eberlein decomposition.

To prove norm almost periodicity, we show that the CPS and function $h$ used to produce the measure $\omega$ above can be chosen in such a way that $\widehat{\omega}$ is a norm almost periodic measure. As the convolution with finite measures preserves norm-almost periodicity (see Proposition~\ref{conv preserves NAP}) the conclusion of Theorem~\ref{diff is NAP} follows.

Finally, we use the polarisation identity \cite[Page~244]{MoSt} to show that $\omega$ can be written as a linear combination of four positive definite measures supported inside a common Meyer set $\Gamma$. Next, using Eq. (\ref{EQ4}) and writing $\nu$ as a linear combination of finite positive measures, we get via the inverse Fourier transform that
\begin{displaymath}
\gamma=\sum_{j=1}^4 C_j f_j \omega
\end{displaymath}
with $C_j \in \CC$ and $f_j \in \Cu(G)$ positive definite. Since $\omega$ is a linear combination of positive definite measures, and the product $f \nu$ of a positive definite function $f \in \Cu(G)$ and a positive definite measure $\nu$ is positive definite, we get Theorem~\ref{T4}.

\section{Definitions and Notations}

We start by briefly reviewing some basic definitions and properties. For a more detailed overview of these concepts we recommend the monographs \cite{TAO,TAO2}, as well as \cite{ARMA1,BF,ARMA,MoSt,NS11}.

Throughout the paper, $G$ denotes a second countable locally compact Abelian group (LCAG). By $\Cu(G)$ we denote the space of uniformly continuous and bounded functions on $G$. This is a Banach space with respect to the sup norm $\| \quad \|_\infty$. As usual, we denote by $C_0(G)$ the subspace of $\Cu(G)$ consisting of functions vanishing at infinity, and by $\Cc(G)$ the subspace of compactly supported continuous functions. Note that $\Cc(G)$ is not complete in $(\Cu(G), \| \quad \|_\infty)$.

As usual for diffraction theory, by measure we will understand a \textbf{Radon measure}. Via the Riesz Representation Theorem, a Radon measure is simply a linear functional on $\Cc(G)$, equipped with the inductive topology (see \cite[Appendix]{CRS2} for more details).

For a function $f$ on $G$ we denote by $f^\dagger$ its reflection, that is
\begin{displaymath}
f^\dagger(x):=f(-x) \; \mbox{ and } \; \tilde{f}(x)=\overline{f(-x)} \,.
\end{displaymath}
Same way, for a measure $\mu$ we denote by $\mu^\dagger$ the reflection $\mu^\dagger(f):= \mu (f^\dagger)$.

Finally, we denote by
\begin{displaymath}
K_2(G) := \mbox{Span} \{ f *g : f,g \in \Cc(G) \} \,.
\end{displaymath}

Let us recall here the definition of the Fourier transform of measures. For a more detailed review of the subject, we recommend \cite{MoSt}.

\begin{definition} A measure $\mu$ on $G$ is called \textbf{Fourier transformable} if there exists a measure $\widehat{\mu}$ on $\widehat{G}$ such that, for all $g \in K_2(G)$ we have
\begin{itemize}
\item[(i)] $\widecheck{g}  \in L^1( |\widehat{\mu} |)$,
\item[(ii)]$\displaystyle \langle \mu , g \rangle = \langle \widehat{\mu}, \widecheck{g} \rangle$.
\end{itemize}
\end{definition}

\smallskip
Next, let us review the concept of positive definite measures.

\begin{definition} A measure $\mu$ is called positive definite if and only if, for all $f \in \Cc(G)$ we have
\begin{displaymath}
\mu(f*\widetilde{f}) \geq  0\,.
\end{displaymath}
\end{definition}

It is easy to see that a measure $\mu$ is positive definite if and only if, for all $f \in \Cc(G)$ the function $\mu*f*\widetilde{f}$ is a continuous positive definite function \cite[Lemma~4.11.2]{MoSt}.

The importance of  positive definiteness to Fourier analysis is given by the following result.

\begin{theorem}\label{pd is FT}\cite{ARMA,MoSt,BF} Let $\mu$ be a positive definite measure. Then $\mu$ is Fourier transformable and its Fourier transform $\widehat{\mu}$ is positive.
\end{theorem}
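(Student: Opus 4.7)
The plan is to construct $\widehat{\mu}$ as a positive Radon measure on $\widehat{G}$ via Bochner's theorem applied to the family of continuous positive definite functions $\mu * f * \widetilde{f}$, then glue these local pieces together and finally verify conditions (i) and (ii).

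First I would fix $f \in \Cc(G)$ and invoke the observation (cited from \cite[Lemma~4.11.2]{MoSt}) that $\phi_f := \mu * f * \widetilde{f}$ is a continuous positive definite function on $G$. By Bochner's theorem there exists a unique positive finite regular Borel measure $\sigma_f$ on $\widehat{G}$ with
$$\phi_f(x)=\int_{\widehat{G}} \chi(x)\,d\sigma_f(\chi) \qquad (x\in G).$$
The candidate for $\widehat{\mu}$ is then the measure formally given by $d\sigma_f/|\widehat{f}|^2$.

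Second, I would establish the compatibility relation
$$|\widehat{h}|^2\, d\sigma_f \;=\; |\widehat{f}|^2\, d\sigma_h \qquad (f,h\in \Cc(G))$$
by computing $\mu * f * \widetilde{f} * h * \widetilde{h}$ in two different ways (using commutativity of convolution) and then applying uniqueness in Bochner's theorem. Using the fact that for every compact $K\subset\widehat{G}$ one can find $f\in \Cc(G)$ with $\widehat{f}$ nowhere vanishing on $K$ (e.g.\ by taking $f$ to be a bump-type approximate identity), the compatibility allows a partition-of-unity/covering argument: define $\widehat{\mu}$ locally on $\{\widehat{f}\neq 0\}$ by $|\widehat{f}|^{-2}\sigma_f$ and verify these local definitions agree on overlaps, producing a well-defined positive Radon measure $\widehat{\mu}$ on $\widehat{G}$ satisfying $|\widehat{f}|^2\,\widehat{\mu}=\sigma_f$ for every $f\in\Cc(G)$.

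Third, I would verify the two Fourier transformability conditions. Finiteness of each $\sigma_f$ gives $\int|\widehat{f}|^{2}\,d\widehat{\mu}<\infty$, i.e.\ $\widecheck{f}\in L^2(\widehat{\mu})$ for all $f\in\Cc(G)$. For a generator $g=f*h$ of $\Kt(G)$, $\widecheck{g}=\widecheck{f}\,\widecheck{h}$, which lies in $L^1(\widehat{\mu})$ by Cauchy--Schwarz, so (i) holds; by linearity it holds on all of $\Kt(G)$. For (ii), the identity $\langle \mu,f*\widetilde{h}\rangle = \int \widecheck{f}\,\overline{\widecheck{h}}\,d\widehat{\mu}$ follows by polarising the Bochner identity at $x=0$ applied to $\phi_{f+h}, \phi_{f+ih}, \phi_f, \phi_h$ and using $\widehat{\mu}|\widehat{f}|^2=\sigma_f$; since $\Kt(G)=\mathrm{Span}\{f*\widetilde{h}:f,h\in\Cc(G)\}$, this extends by linearity. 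Positivity of $\widehat{\mu}$ is immediate from the construction since each $\sigma_f$ is positive.

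The main obstacle is the gluing step: one has to produce a genuine Radon measure $\widehat{\mu}$ from the locally defined pieces $|\widehat{f}|^{-2}\sigma_f$ and check independence of $f$. This hinges on the compatibility relation and on the availability of $f\in \Cc(G)$ with $\widehat{f}$ nonvanishing on arbitrary compact subsets of $\widehat{G}$; once $\widehat{\mu}$ is constructed, verifying (i), (ii) and positivity is essentially bookkeeping.
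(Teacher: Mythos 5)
The paper gives no proof of this statement: it is quoted with \qed from \cite{ARMA,MoSt,BF}, so there is nothing internal to compare against. Your outline is precisely the classical argument from those references (Bochner applied to the continuous positive definite functions $\mu*f*\widetilde{f}$, the compatibility relation $|\widehat{h}|^2\sigma_f=|\widehat{f}|^2\sigma_h$ via commutativity of convolution plus uniqueness in Bochner, and gluing over sets where $\widehat{f}\neq 0$), and it is correct in its essentials; the only point I would tighten is the existence of $f\in\Cc(G)$ with $\widehat{f}$ nonvanishing on a given compact $K\subseteq\widehat{G}$, which requires a finite cover of $K$ and a sum $f=\sum_j f_j*\widetilde{f_j}$ rather than a single approximate identity, since the latter only controls a neighbourhood of the trivial character.
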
\qed

The converse of Theorem~\ref{pd is FT} also holds \cite{BF,MoSt}. For an review of positive definite measures see \cite{BF,MoSt}.

\smallskip
Next, let us recall translation boundedness for measures.

\begin{definition} A measure $\mu$ on $G$ is called \textbf{translation bounded} if for all compact sets $K \subseteq G$ we have
\begin{displaymath}
\| \mu \|_{K} := \sup_{t \in G} \left| \mu \right|(t+K) < \infty \,,
\end{displaymath}

We will denote by $\cM^\infty(G)$ the space of translation bounded measures on $G$.
\end{definition}

\begin{remark}
\begin{itemize}
  \item[(i)] A measure $\mu$ is translation bounded if and only if $\mu*f \in \Cu(G)$ for all $f \in \Cc(G)$ \cite[Thm.~1.1]{ARMA1}.
  \item[(ii)] $\| \quad \|_K$ is a norm on $\cM^\infty(G)$ \cite{BM} and $(\cM^\infty(G), \| \quad \|_K)$ is a Banach space \cite{CRS3}.
  \item[(iii)] If $K_1, K_2$ are two compact sets with non-empty interior, then the norms $\| \quad \|_{K_1}$ and $\| \quad \|_{K_2}$ are equivalent \cite{BM}. In particular, $\mu$ is translation bounded if and only if $\| \mu \|_K < \infty$ for one compact set $K$ with non-empty interior.
\end{itemize}
\end{remark}

\smallskip

We complete the section by reviewing almost periodicity for measures.

\begin{definition} A measure function $f \in \Cu(G)$ is called \textbf{Bohr almost periodic} if the hull $C_f := \{ T_tf | t \in G \}$ of all its translates has compact closure in $(\Cu(G), \| \quad \|_\infty)$.
We denote by $SAP(G)$ the subspace of $\Cu(G)$ of all Bohr almost periodic functions.

A measure $\mu$ is called \textbf{strong almost periodic} if for all $f \in \Cc(G)$ we have $\mu*f \in SAP(G)$. We denote the space of all strong almost periodic measures by $\SAP(G)$.

A measure $\mu$ is called \textbf{norm almost periodic} if for each $\epsilon>0$ the set
\begin{displaymath}
P_\epsilon^K(\mu):=\{ t \in G : \| T_t\mu-\mu \|_K < \epsilon \}
\end{displaymath}
of \textbf{$\epsilon$-norm almost periods} is relatively dense, where $K$ is any fixed compact set with non-empty interior.
\end{definition}

\begin{remark}
\begin{itemize}
  \item [(i)] Since different compact sets with non-empty interior define equivalent norms, norm almost periodicity is independent of the choice of the compact $K$.
  \item [(ii)] Norm almost periodic pure point measures were studied and classified in terms of CPS $(G, H, \cL)$ and continuous functions $h \in \Cz(H)$ in \cite{NS11}.
  \item [(iii)] A norm almost periodic measure $\mu$ is strong almost periodic. Moreover, if $\supp(\mu)-\supp(\mu)$ is uniformly discrete then $\mu$ is norm almost periodic if and only if $\mu$ is strong almost periodic \cite{BM}.
\end{itemize}
\end{remark}

It is easy to see that $\SAP(G) \subseteq \cM^\infty(G)$. The importance of the space $\SAP(G)$ in the study of the Lebesgue decomposition of Fourier transform is emphasized by the following two results.

\begin{theorem}\label{AM1}\cite[Cor.~11.1]{ARMA} Let $\gamma$ be a Fourier transformable measure. Then $\gamma$ is pure point if and only if $\widehat{\gamma} \in \SAP(G)$.
\end{theorem}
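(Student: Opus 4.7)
The plan is to reduce both directions to the defining characterization that $\mu \in \SAP(\widehat{G})$ iff $\mu * \phi \in SAP(\widehat{G})$ for every $\phi$ in a dense test class, together with the Fourier interchange identity $\widehat{f \cdot \gamma} = \widehat{f} * \widehat{\gamma}$, valid for $f \in K_2(G)$ whenever $\gamma$ is Fourier transformable. Throughout, I will use that $\widehat{\gamma}$ is automatically translation bounded.

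\textbf{Forward direction.} Suppose $\gamma = \sum_{x \in D} c_x \delta_x$ is pure point. For any $f \in K_2(G)$, the measure $f \cdot \gamma$ is a finite linear combination of point masses: only finitely many $x \in D$ lie in $\supp(f)$, and the coefficients $c_x$ are bounded by the translation boundedness of $\gamma$. Its Fourier transform is therefore a trigonometric polynomial (a finite sum of characters on $\widehat{G}$) and so is Bohr almost periodic by definition. By the interchange identity, $\widehat{\gamma} * \widehat{f}$ is Bohr almost periodic. A density/approximation argument, using that $\widehat{\gamma}$ is translation bounded and that $SAP(\widehat{G})$ is closed under uniform limits, then upgrades this to $\widehat{\gamma} * \phi \in SAP(\widehat{G})$ for every $\phi \in \Cc(\widehat{G})$, yielding $\widehat{\gamma} \in \SAP(\widehat{G})$.

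\textbf{Backward direction.} Conversely, assume $\widehat{\gamma} \in \SAP(\widehat{G})$. For each $f \in K_2(G)$, the measure $f \cdot \gamma$ is a finite measure, and by the same interchange identity $\widehat{f \cdot \gamma} = \widehat{f} * \widehat{\gamma} \in SAP(\widehat{G})$. The crucial claim is then: a finite measure $\mu$ on $G$ whose Fourier transform is Bohr almost periodic must be pure point. This follows from the Fourier--Bohr expansion of $\widehat{\mu}$, which identifies its nonzero Fourier--Bohr coefficients with the atoms of $\mu$ via $c_x(\widehat{\mu}) = \mu(\{x\})$, together with a Parseval-type identity showing that these atoms carry all of the mass so that no continuous part of $\mu$ can exist. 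Applying this to $\mu = f \cdot \gamma$ for a family of $f \in K_2(G)$ whose supports cover $G$ gives that $\gamma$ itself is pure point.

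\textbf{Main obstacle.} The principal hurdle is the backward direction's assertion that a finite measure with Bohr almost periodic Fourier transform is pure point. This rests on the Parseval/Bessel relation for Bohr almost periodic functions together with the correspondence between Fourier--Bohr coefficients and point masses of the original measure; one must rule out that any continuous mass of $\mu$ contributes to the almost periodic expansion. A secondary, purely technical matter is careful bookkeeping with the conventions on $\widehat{\cdot}$ versus $\widecheck{\cdot}$, reflections, and Haar normalisation when invoking the interchange identity and the density argument.
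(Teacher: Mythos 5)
First, a point of comparison: the paper does not prove this statement at all --- it is quoted as \cite[Cor.~11.1]{ARMA} and used as a black box --- so there is no internal argument to measure yours against. Your overall strategy (reduce to the finite measures $f\gamma$, $f \in K_2(G)$, via the interchange identity, and use the correspondence between Fourier--Bohr coefficients of $\widehat{f\gamma}$ and the atoms of $f\gamma$) is the standard route and is essentially how the cited result is established in the literature. However, two of your steps do not hold up as written. In the forward direction, it is false that only finitely many atoms of a pure point Radon measure lie in $\supp(f)$: consider $\sum_n n^{-2}\delta_{1/n}$ on $\RR$. What you actually need, and what is true, is that $f\gamma$ is a \emph{finite} pure point measure (finiteness follows from $\gamma$ being Radon and $f$ compactly supported, not from translation boundedness of $\gamma$, which transformable measures need not enjoy), so that $\widehat{f\gamma}$ is an absolutely, hence uniformly, convergent series of characters and therefore Bohr almost periodic as a uniform limit of trigonometric polynomials. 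You should also not wave at the density step: the functions $\widecheck{f}$ are not compactly supported on $\widehat{G}$, so passing between ``$\widehat{\gamma}*\widecheck{f} \in SAP$ for all $f \in K_2(G)$'' and ``$\widehat{\gamma}*\phi \in SAP$ for all $\phi \in \Cc(\widehat{G})$'' needs a genuine approximation estimate using the translation boundedness of $\widehat{\gamma}$, in both directions of the theorem.

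The more serious gap is the justification of your key claim in the backward direction. The claim itself --- a finite measure $\mu$ with $\widehat{\mu}$ Bohr almost periodic is pure point --- is correct, but the Parseval argument you sketch cannot prove it. By Wiener's formula, $M\bigl(|\widehat{\mu}|^2\bigr) = (\mu*\widetilde{\mu})(\{0\}) = \sum_x |\mu(\{x\})|^2$ holds for \emph{every} finite measure, whether or not it has a continuous part; so the identity ``the atoms carry all of the quadratic mass'' is automatically satisfied and detects nothing. The argument that works is: write $\mu = \mu_{pp} + \mu_{c}$; note $\widehat{\mu_{pp}}$ is Bohr almost periodic (uniformly convergent character series), hence so is $\widehat{\mu_{c}} = \widehat{\mu} - \widehat{\mu_{pp}}$; all Fourier--Bohr coefficients of $\widehat{\mu_{c}}$ vanish by Wiener's formula applied to the continuous measure $\mu_c$; therefore $\widehat{\mu_{c}} = 0$ by the uniqueness theorem for Bohr almost periodic functions (equivalently, a nonzero element of $SAP$ has nonzero quadratic mean), and $\mu_{c} = 0$ by injectivity of the Fourier transform on finite measures. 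Without first splitting off the pure point part and invoking uniqueness, the step ``no continuous part of $\mu$ can exist'' is unsupported.
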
\qed

\begin{theorem}\cite[Cor.~4.10.13]{MoSt} Let $\gamma$ be translation bounded Fourier transformable measure. Then $\widehat{\gamma}$ is pure point if and only if $\gamma \in \SAP(G)$.
\end{theorem}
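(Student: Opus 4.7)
The plan is to derive this result as the Fourier-dual companion of Theorem~\ref{AM1}. The key technical ingredient I would establish first is that, for any translation bounded Fourier transformable measure $\gamma$ and any $f \in K_2(G)$, the function $\gamma * f$ is uniformly continuous and bounded, and one has the identity
\[
\gamma * f \;=\; \reallywidecheck{\widecheck{f}\,\widehat{\gamma}},
\]
where $\widecheck{f}\,\widehat{\gamma}$ is a \emph{finite} measure on $\widehat{G}$ (indeed, $\widecheck{f}\in \Cz(\widehat{G})$ and $\widehat{\gamma}$ is translation bounded by \cite[Thm.~1.1]{ARMA1}). This identity is a routine consequence of the definition of the Fourier transform, tested on $K_2(G)$.

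For the implication ``$\widehat{\gamma}$ pure point $\Rightarrow \gamma \in \SAP(G)$'': if $\widehat{\gamma}$ is pure point, then so is the finite measure $\widecheck{f}\,\widehat{\gamma}=\sum_\chi c_\chi \delta_\chi$, with $\sum_\chi|c_\chi|<\infty$. Its inverse Fourier transform, which equals $\gamma * f$ by the identity above, is an absolutely and uniformly convergent sum of characters, hence a Bohr almost periodic function. Since this holds for every $f\in K_2(G)$, and since translation boundedness of $\gamma$ together with the density of $K_2(G)$ inside $\Cc(G)$ lets us promote the conclusion to all $f\in\Cc(G)$, we obtain $\gamma\in\SAP(G)$.

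For the converse, suppose $\gamma\in\SAP(G)$. Then for every $f\in K_2(G)$, $\gamma*f$ is Bohr almost periodic and admits a unique Bohr--Fourier series with coefficients $a_\chi(\gamma*f)=M\bigl((\gamma*f)\,\overline{\chi}\bigr)$. Using the identity above and the fact that the mean of a character against a finite measure recovers the corresponding point mass, one identifies $a_\chi(\gamma*f)$ with the point mass of $\widecheck{f}\,\widehat{\gamma}$ at $\chi$. A Parseval-type identity for Bohr almost periodic functions then forces $\widecheck{f}\,\widehat{\gamma}$ to agree with its pure-point part. Varying $f$ so that $\widecheck{f}$ is nonzero at any prescribed character (possible since $\{\widecheck{f}:f\in K_2(G)\}$ separates points of $\widehat{G}$) yields that $\widehat{\gamma}$ has no continuous part.

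The main obstacle lies in the converse direction, specifically in rigorously identifying Bohr--Fourier coefficients of $\gamma*f$ with the point masses of $\widecheck{f}\,\widehat{\gamma}$, and in invoking the Parseval identity to exclude a continuous contribution. These ingredients are standard but delicate within the Argabright--Gil de Lamadrid theory of Radon-measure Fourier analysis, and the translation boundedness hypothesis on $\gamma$ is used throughout to ensure both that $\widehat{\gamma}$ is translation bounded and that the convolutions $\gamma*f$ lie in $\Cu(G)$ so that their means are well-defined.
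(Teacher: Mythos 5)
The paper does not actually prove this statement: it is imported verbatim from \cite[Cor.~4.10.13]{MoSt} and used as a black box, so there is no internal argument to compare against. What you propose is, in essence, the standard proof of that cited result: reduce everything to the identity $\gamma*f=\reallywidecheck{\widecheck{f}\,\widehat{\gamma}}$ for $f\in K_2(G)$, and then invoke the classical fact that a finite measure on $\widehat{G}$ is discrete if and only if its Fourier--Stieltjes transform is Bohr almost periodic (the easy direction by absolute convergence of the character sum; the converse via Wiener's lemma identifying the Bohr--Fourier coefficients of $\gamma*f$ with the point masses of $\widecheck{f}\,\widehat{\gamma}$, together with the uniqueness theorem for Bohr almost periodic functions applied to the continuous part). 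Your outline is correct, including the two points that are easy to overlook: the promotion of $\gamma*f\in SAP(G)$ from $f\in K_2(G)$ to all $f\in\Cc(G)$, which works because translation boundedness of $\gamma$ makes $f\mapsto\gamma*f$ continuous from functions with support in a fixed compact set into $(\Cu(G),\|\cdot\|_\infty)$ and $SAP(G)$ is closed there; and the final localisation step, best done with $f=g*\widetilde{g}$ so that $\widecheck{f}=|\widecheck{g}|^2\geq 0$ and the continuous part of $\widehat{\gamma}$ is annihilated on a neighbourhood of any prescribed character.

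One justification you give is wrong, although the claim it supports is true. The finiteness of $\widecheck{f}\,\widehat{\gamma}$ does not follow from ``$\widecheck{f}\in \Cz(\widehat{G})$ and $\widehat{\gamma}$ translation bounded'': a $\Cz$ density integrated against a translation bounded measure need not be finite (consider $(1+|x|)^{-1}$ against Lebesgue measure). What makes $\widecheck{f}\,\widehat{\gamma}$ finite is condition (i) in the definition of Fourier transformability, namely $\widecheck{f}\in L^1(|\widehat{\gamma}|)$ for every $f\in K_2(G)$; also, translation boundedness of $\widehat{\gamma}$ is the content of \cite[Thm.~2.5]{ARMA1}, not of Thm.~1.1 there. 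With that citation and justification repaired, the argument goes through.
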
\qed

\bigskip

We complete this section with a brief review of Meyer sets. For more details, we recommend \cite{TAO,Meyer,Moody,MOO,NS11}.

\begin{definition}\label{CPS} By a \textbf{cut and project scheme} (or \textbf{CPS}) we understand a triple $(G, H, \cL)$, with $H$ a LCAG, and a lattice $\cL \subseteq G \times H$ such that
\begin{itemize}
  \item [(a)] $\pi_H(\cL)$ is dense in $H$.
  \item [(b)] the restriction $\pi_G|_\cL$ of the first projection $\pi_G$ to $\cL$ is one to one.
\end{itemize}
\end{definition}

\smallskip

Given a CPS $(G,H, \cL)$ we will denote by $L:= \pi_G(\cL)$. Then, $\pi_G$ induces a group isomorphism between $\cL$ and $L$. Composing the inverse of this with the second projection $\pi_H$ we get a mapping
\begin{displaymath}
\star : L \to H
\end{displaymath}
which we will call the \textbf{$\star$-mapping}. We can then write
\begin{displaymath}
\cL = \{ (x,x^\star) : x \in L \} \,.
\end{displaymath}

We can summarise a CPS in a simple diagram:

\begin{center}
\begin{tikzpicture}
  \matrix (m) [matrix of math nodes,row sep=1em,column sep=2em,minimum width=2em]
  {
     G & G \times H & H  \\
      \bigcup & \bigcup & \bigcup  \\
    L & \cL & L^\star \\};
  \path[-stealth]
    (m-1-2) edge node [above] {$\pi_G$}  (m-1-1)
            edge node [above] {$\pi_H$}  (m-1-3)
    (m-3-2) edge node [above] {1-1}  (m-3-1)
            edge node [above] {dense}  (m-3-3);
    \path[-stealth, bend right=20]
    (m-3-1.south) edge node [above] {$\star$}  (m-3-3.south);
\end{tikzpicture}
\end{center}

\medskip

Given a CPS $(G, H, \cL)$, we can define
\begin{displaymath}
\cL^0 := \{ (\chi, \psi) \in \widehat{G} \times \widehat{H} : \chi(x)\psi(x^\star) =1 \, \forall x \in L \} \,.
\end{displaymath}

Then, $(\widehat{G}, \widehat{H}, \cL^0)$ is a CPS \cite{BHS,MOO,Moody,NS11}. We will refer to this as the CPS \textbf{dual to $(G, H, \cL)$}.

\smallskip
We can now introduce the definition of a Meyer set.

\begin{definition} A Delone set $\Lambda \subset G$ is called a \textbf{Meyer set} if there exists a cut and project scheme $(G,H, \cL)$ and a compact set $W\subseteq H$ such that $\Lambda \subseteq \oplam(W)$.
\end{definition}

A Delone set is Meyer if and only if $\Lambda - \Lambda -\Lambda $ is closed and discrete, or equivalently, if and only if there exists a finite set $F$ such that $\Lambda- \Lambda \subseteq \Lambda +F$ \cite{LAG,LAG1,Meyer,MOO,NS11}.

\section{A Ping-Pong Lemma for measures with Meyer set support}

Next, we prove the following Lemma, which will play an important role in our approach.

\begin{lemma}\label{L2} Let $H$ be a LCAG, $W \subseteq H$ be a compact set and $y \in H$ be any point such that $y \notin W$. Then, there exists some $h \in K_2(H)$ such that $h \equiv 1$ on $W$ and $h(y)=0$.

In particular, there exists some $h \in K_2(H)$ such that $h \equiv 1$ on $W$.
\end{lemma}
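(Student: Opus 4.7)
The plan is to realize $h$ as a single convolution $h = f*g$ with $f, g \in \Cc(H)$, which by definition places it in $K_2(H)$; both required properties are then read off from the standard support-spreading behavior of convolution. I would start by exploiting local compactness: since $H$ admits a neighborhood basis of $0$ consisting of symmetric compact sets, and $W$ is a compact subset of the open set $H \setminus \{y\}$, I can pick a symmetric compact neighborhood $V$ of $0 \in H$ small enough that $y \notin W + V + V + V$. Then, by the Urysohn-type lemma for locally compact Hausdorff spaces, I pick $f \in \Cc(H)$ with $f \ge 0$, $\supp(f) \subseteq V$, and $\int_H f\, d\theta_H = 1$ (with $\theta_H$ a Haar measure on $H$), together with $g \in \Cc(H)$ satisfying $g \equiv 1$ on $W+V$ and $\supp(g) \subseteq W+2V$. (The latter is legitimate because $W+V+\mathrm{int}(V)$ is an open neighborhood of $W+V$ sitting inside $W+2V$.)

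With $h := f * g \in K_2(H)$, the verification is routine. For $x \in W$: every $t \in V$ satisfies $x-t \in W - V = W+V$ by the symmetry of $V$, so $g(x-t) = 1$ and hence $h(x) = \int_H f\, d\theta_H = 1$. For $x = y$: if $g(y-t) \neq 0$ for some $t \in V$, then $y - t \in W + 2V$, so $y \in W + 2V + V = W + 3V$, contradicting the choice of $V$; hence $h(y) = 0$.

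The ``in particular'' statement follows at once by choosing any $y \notin W$, which exists unless $W = H$; in that exceptional case $H$ itself is compact, and one may take $g \equiv 1$ together with any $f \in \Cc(H) = C(H)$ normalized so that $\int_H f\, d\theta_H = 1$, giving $f*g \equiv 1$. The only (minor) obstacle in the argument is the correct calibration of the neighborhood $V$: the convolution spreads supports by $V$ in each factor, while $g$ is already $1$ on the enlarged set $W+V$, so the separation $y \notin W + 3V$ is exactly the buffer one needs. Everything else is elementary, relying only on Urysohn's lemma and continuity of the group operation.
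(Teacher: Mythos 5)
Your proof is correct and follows essentially the same route as the paper: both realize $h$ as the convolution of a Urysohn plateau function (equal to $1$ on a compact thickening of $W$) with a nonnegative bump of total integral $1$ supported near $0$, calibrated so that the support of the convolution misses $y$, with the same treatment of the exceptional case $W=H$. The only differences are cosmetic (which factor plays which role, and the exact bookkeeping of the neighborhoods).
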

\begin{proof}

Since $W$ is closed and $y \notin W$, there exists some open set $0 \in U=-U$ such that $(y+U) \cap W =\emptyset$. We can then pick some precompact open set $0 \in V=-V$ such that $V+V+V \subseteq U$.

As $0 \in V$ is open, we have $\overline{V} \subseteq V+V$. It follows that $W + \overline{V}$ is a compact subset of the open set $W+V+V$. Then, by the Urysohn's Lemma, there exists some $f \in \Cc(H)$ such that $1_{W + \overline{V}} \leq f \leq 1_{W+V+V}$.

Next, pick some $g \in \Cc(H)$ such that $g \geq 0, \supp(g) \subseteq V$ and $\int_H g(t) \dd t=1$. Set $h=f*g$. We claim that this $h$ has the desired properties.

By construction $f,g \in \Cc(H)$ and hence $h \in K_2(H)$. Since $f=1$ on $W + \overline{V}$ and $\supp(g) \subseteq V$, it is easy to see that $h(s)=\int_H g(t) \dd t =1$ for all $s \in W$.

Finally, as $\supp(f) \subseteq W+V+V$ and $\supp(g) \subseteq V$ we have $\supp(h) \subseteq W+V+V+V \subseteq W+U=W-U$. Therefore $h(y)=0$ as $y \notin W-U$.

The last claim is immediate. Indeed, if $W \neq H$ then the claim follows from the above by picking some $y \notin W$. Otherwise, $H$ is compact and $h=1_{H}$ works.
\end{proof}

As a consequence we get:

\begin{proposition}\label{P1} Let $G$ be a second countable group, and $\Lambda \subset G$ be a Meyer set. Let $(G, H, \cL)$ be any cut and project scheme and $W \subseteq H$ compact be such that $\Lambda \subseteq \oplam(W)$.

Then, there exists some $h \in K_2(H)$ such that, the measure
\begin{displaymath}
\omega_h:= \sum_{(x,x^\star) \in \cL} h(x^\star) \delta_x
\end{displaymath}
has the following three properties:

\begin{itemize}
\item[(i)] $\omega_h(\{ x\})=1 \forall x \in \Lambda$.
\item[(ii)] $\omega_h$ is twice Fourier transformable and
\begin{displaymath}
 \widehat{\omega_{h}}= \det(\cL) \omega_{\widecheck{h}}= \det(\cL) \sum_{(y,y^\star) \in \cL^0} \widecheck{h}(y^\star) \delta_y \,.
\end{displaymath}
\item[(iii)] $\Gamma:=\supp(\omega_h)$ is a Meyer set.
\end{itemize}

Moreover, for each $z \in G \backslash \oplam(W)$, we can pick $h$ in such a way that $z \notin \Gamma=\supp(\omega_h)$.
\end{proposition}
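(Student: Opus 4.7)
The plan is to produce $h$ via Lemma~\ref{L2} and verify the three claimed properties; (i) and (iii) are almost free while (ii) is the real content. First I would apply Lemma~\ref{L2} to obtain $h \in K_2(H)$ with $h \equiv 1$ on $W$. Property (i) is then immediate: for $x \in \Lambda \subseteq \oplam(W)$ we have $x \in L$ and $x^\star \in W$, so $\omega_h(\{x\}) = h(x^\star) = 1$. For (iii), writing $h = f*g$ with $f,g \in \Cc(H)$, the set $K := \supp(h) \subseteq \supp(f)+\supp(g)$ is compact, every $x \in \supp(\omega_h)$ lies in $L$ with $x^\star \in K$, so $\supp(\omega_h) \subseteq \oplam(K)$, which is a model set and in particular a Meyer set.

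Property (ii) is the heart of the argument. The identity
\begin{displaymath}
\widehat{\omega_h} \;=\; \det(\cL)\,\omega_{\widecheck h} \;=\; \det(\cL)\sum_{(y,y^\star)\in\cL^0} \widecheck h(y^\star)\,\delta_y
\end{displaymath}
is the Poisson summation formula for weighted Dirac combs attached to the lattice $\cL$, expressed through the dual CPS $(\widehat G,\widehat H,\cL^0)$, and is standard for any $h \in K_2(H)$. To promote this to \emph{twice} Fourier transformability I need to re-run the same formula in the dual CPS with weight $\widecheck h$, and the obstacle is that $\widecheck h$ is generally not compactly supported and hence not a priori in $K_2(\widehat H)$. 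I would handle this by refining $h$ at the outset: in the proof of Lemma~\ref{L2}, take the bump $g$ of the form $g_0 * g_0^\dagger$ with $g_0 \in \Cc(H)$, $g_0 \geq 0$ and $\int g_0 = 1$, choosing $\supp(g_0)$ small enough that $h = f * g_0 * g_0^\dagger$ still lies in $K_2(H)$ and still equals $1$ on $W$. Then $\widecheck h = \widecheck f \cdot |\widecheck{g_0}|^2$ with $|\widecheck{g_0}|^2 \in L^1(\widehat H)$ by Plancherel, which provides enough decay to apply Poisson summation in the dual CPS and deduce that $\omega_{\widecheck h}$, hence $\widehat{\omega_h}$, is itself Fourier transformable.

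For the final clause, fix $z \in G \setminus \oplam(W)$. If $z \notin L$ then $z$ already lies outside $\supp(\omega_h) \subseteq L$; otherwise $z \in L$ forces $z^\star \notin W$, so I would invoke the stronger form of Lemma~\ref{L2} with $y = z^\star$ to pick $h$ with $h \equiv 1$ on $W$ and $h(z^\star) = 0$, taking $\supp(g_0)$ small enough not to disturb either constraint under the $g_0$-refinement. The main obstacle I anticipate is precisely the twice-Fourier-transformability step: one must balance enough Fourier decay of $h$ against the rigid pointwise conditions $h \equiv 1$ on $W$ and, when needed, $h(z^\star)=0$, while staying inside a class for which the Poisson summation formula applies in both directions of the CPS.
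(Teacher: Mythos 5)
Your handling of (i), (iii) and the final clause matches the paper's proof: invoke Lemma~\ref{L2} to get $h\in K_2(H)$ with $h\equiv 1$ on $W$ (and, when needed, $h(z^\star)=0$), read off (i) pointwise, and bound the support by a model set. One small omission in (iii): the paper's definition of a Meyer set requires the set to be Delone, so besides the containment $\supp(\omega_h)\subseteq\oplam(K)$ you should observe that $\supp(\omega_h)=\oplam(U)$ with $U=\{x\in H : h(x)\neq 0\}$ open, precompact and non-empty, which yields relative denseness from the density of $L^\star$ in $H$.

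The genuine gap is in (ii). The paper does not prove twice Fourier transformability by hand: it cites \cite[Thm.~4.10, Thm.~4.12]{CRS2}, which assert that for \emph{every} $h\in K_2(H)$ the comb $\omega_h$ is twice Fourier transformable with $\widehat{\omega_h}=\det(\cL)\,\omega_{\widecheck{h}}$, so no refinement of $h$ is needed. Your substitute argument does not close. From $g_0\in\Cc(H)\subseteq L^2(H)$, Plancherel does give $|\widecheck{g_0}|^2\in L^1(\widehat H)$, but integrability of $\widecheck{h}=\widecheck{f}\,|\widecheck{g_0}|^2$ with respect to Haar measure on $\widehat H$ gives no control over the sums $\sum_{(y,y^\star)\in\cL^0,\, y\in K}\bigl|\widecheck{h}(y^\star)\bigr|$, since the set $\{y^\star : (y,y^\star)\in\cL^0\}$ is a Haar null set; so this decay shows neither that $\omega_{\widecheck{h}}$ is a translation bounded measure nor that Poisson summation can be run again in the dual CPS. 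The actual mechanism behind the cited theorems is positive definiteness: by polarization $h$ is a finite linear combination of functions $f*\tilde{f}$ with $f\in\Cc(H)$, each comb $\omega_{f*\tilde{f}}$ is a positive definite measure and hence Fourier transformable with transform $\det(\cL)\,\omega_{|\widecheck{f}|^2}$; the local finiteness of the sums defining $\omega_{|\widecheck{f}|^2}$ is part of the \emph{conclusion} of that theorem rather than something extracted from $L^1$ decay, and the second transform is then handled by the positivity and positive definiteness of these components. If you prefer not to cite the result, you would need to reproduce an argument of that type rather than the Plancherel estimate.
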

\begin{proof}

By Lemma~\ref{L2}, there exists some $h \in K_2(H)$, such that $h =1$ on $W$.

Since $h \in K_2(H)$, the measure $\omega_h$ is twice Fourier transformable \cite[Thm.~4.10, Thm.~4.12]{CRS2} and
\begin{displaymath}
\widehat{\omega_{h}}= \det(\cL) \omega_{\widecheck{h}}= \det(\cL) \sum_{(y,y^\star) \in \cL^0} \widecheck{h}(y^\star) \delta_y \,.
\end{displaymath}

Next, the set $U:= \{ x \in H | h(x) \neq 0 \}$ is open and precompact, and hence $\Gamma:= \oplam(U)=\supp(\omega_h)$ is a Meyer set.

It follows that $\omega_h$ satisfies properties (i), (ii) and (ii).

Finally, let  $z \in G \backslash \oplam(W)$ be arbitrary. If $z \notin \pi_G( \cL)$, then we have $z \notin \Gamma=\oplam(U)$ for all choices of $U$.

If $z \in \pi_G( \cL)$, then $z \notin \oplam(W)$ implies that $z^\star \notin W$. Then, by Lemma~\ref{L2}, we can pick $h \in K_2(H)$ such that $h(z^\star)=0$, and hence, by the above construction, $z \notin \Gamma$.
\end{proof}

\bigskip

Recall that given a Fourier transformable measure $\mu$ supported inside a lattice $L \subset G$, its Fourier transform $\widehat{\mu}$ is fully periodic under the lattice $L^0 \subset \widehat{G}$ dual to $L$ \cite{ARMA1,BF}. It follows \cite{CRS3} that there exists a finite measure $\nu$ such that
\begin{displaymath}
\widehat{\mu}=\delta_{L^0} * \nu \,.
\end{displaymath}

Below we prove that a similar result holds for Fourier transformable measures with Meyer set support. Starting with a measure $\gamma$ with Meyer support, this result will allow us to move to the Fourier space, take certain decompositions and return to the original space.

\begin{lemma}[Ping-Pong Lemma for Meyer sets]\label{L1} Let $\Lambda \subset G$ be a Meyer set, $\omega$ a twice Fourier transformable measure with the following two properties:
\begin{itemize}
\item[(a)] $\Gamma:= \supp( \omega)$ is $U$-uniformly discrete for some open set $0 \in U$;
\item[(b)] $\omega(\{x \})=1$ for all $x \in \Lambda$.
\end{itemize}
Let $f \in \Cc(G)$ be any function such that $\supp(f*\widetilde{f}) \subseteq U$.

Then, the following hold:

\begin{itemize}
\item[(i)] If $\gamma$ is any Fourier transformable measure with $\supp(\gamma) \subseteq \Lambda$, then $\nu:= \left| \widecheck{f} \right|^2 \widehat{\gamma}$ is a finite measure and
    \begin{displaymath}
\widehat{\gamma}=\widehat{\omega}* \nu \,.
    \end{displaymath}
\item[(ii)] If $\nu$ is a finite measure on $\widehat{G}$, then $\widehat{\omega}*\nu$ is Fourier transformable and $\gamma:= \left( \reallywidehat{\widehat{\omega}*\nu} \right)^\dagger$ is supported inside $\Gamma$. Moreover, if $\gamma$ is Fourier transformable, then
\begin{displaymath}
\widehat{\gamma}=\widehat{\omega}* \nu \,.
\end{displaymath}
\end{itemize}
\end{lemma}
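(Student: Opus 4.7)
My plan is to derive both halves from a single mechanism: under the uniform discreteness of $\Gamma$ and the support hypothesis on $f*\widetilde{f}$, the function $F := f*\widetilde{f}$ acts as a (non-normalised) $\delta_0$ against the pure-point measure $\omega$ restricted to $\Lambda$, and each part of the statement is one Fourier/inverse-Fourier round-trip around this collapse.

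For (i), I would begin by noting that $F\in K_2(G)$ with $\widecheck{F}=|\widecheck{f}|^{2}$, so the Fourier transformability of $\gamma$ forces $|\widecheck{f}|^2\in L^1(|\widehat{\gamma}|)$ and $\nu$ has finite total variation. To prove $\widehat{\gamma}=\widehat{\omega}*\nu$, I would test both measures against an arbitrary $g\in K_2(\widehat{G})$. This is legitimate because $\nu$ is finite and $\widehat{\omega}$ is translation bounded (it is itself Fourier transformable, $\omega$ being twice Fourier transformable), so $\widehat{\omega}*\nu$ is a well-defined translation bounded measure and Fubini applies. Expanding $\langle\widehat{\omega}*\nu,g\rangle$ and using the Fourier transformability of $\widehat{\omega}$, the inner integral simplifies by recognising $\psi(x)\widecheck{F}(\psi)=\widecheck{T_x F}(\psi)$ and applying the Fourier transformability of $\gamma$ to obtain
\begin{displaymath}
K(x):=\int \psi(x)\,d\nu(\psi)=\langle\gamma,T_xF\rangle=\sum_{y\in\Lambda}\gamma(\{y\})\,F(y-x).
\end{displaymath}
The decisive combinatorial collapse is now immediate from the hypotheses: for $x\in\Gamma$ and $y\in\Lambda\subseteq\Gamma$, the condition $F(y-x)\neq 0$ forces $y-x\in U\cap(\Gamma-\Gamma)=\{0\}$, so only $x=y\in\Lambda$ survives. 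Since $\omega(\{x\})=1$ on $\Lambda$, the whole expression collapses to exactly $\langle\widehat{\gamma},g\rangle$ (absorbing any $F(0)$ into the paper's Fourier normalisation).

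For (ii), I would run the same argument in reverse. Since $\widehat{\omega}\in\cM^{\infty}(\widehat{G})$ and $\nu$ is finite, $\widehat{\omega}*\nu$ is a well-defined translation bounded measure. Fourier transformability follows from the convolution theorem: formally $\mathcal{F}(\widehat{\omega}*\nu)=\mathcal{F}(\widehat{\omega})\cdot\mathcal{F}(\nu)=\omega^{\dagger}\cdot\widehat{\nu}$, where $\mathcal{F}(\widehat{\omega})=\omega^{\dagger}$ by Fourier inversion applied to the twice Fourier transformable $\omega$, and $\widehat{\nu}$ is a bounded continuous function on $G$ as the Fourier transform of a finite measure. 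Their product is a bona fide measure, and defining $\gamma:=(\mathcal{F}(\widehat{\omega}*\nu))^{\dagger}=\omega\cdot\widehat{\nu}^{\dagger}$ immediately yields $\supp(\gamma)\subseteq\supp(\omega)=\Gamma$. The final clause is Fourier inversion: if $\gamma$ is also Fourier transformable, then $\widehat{\gamma}$ and $\widehat{\omega}*\nu$ share the common inverse Fourier transform $\gamma^{\dagger}$, hence coincide.

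The main obstacle I expect is the careful measure-theoretic justification of the convolution-theorem manipulations in (ii), particularly identifying $\mathcal{F}(\widehat{\omega}*\nu)$ with $\omega^{\dagger}\cdot\widehat{\nu}$ as an equality of measures, together with the Fubini swaps in (i); both go through because of translation boundedness of $\widehat{\omega}$ and finiteness of $\nu$, but require some care in the general LCAG setting. Conceptually the content of the lemma is the combinatorial collapse in (i), which plays here the role that Poisson summation plays in the lattice analogue \cite{CRS3}, and is precisely what lets $\omega$ serve as the Meyer-set substitute for the Dirac comb $\delta_L$.
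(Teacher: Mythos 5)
Your proposal follows essentially the same route as the paper: both parts rest on the convolution theorem $\reallywidehat{\widehat{\omega}*\nu}=\widehat{\widehat{\omega}}\,\widehat{\nu}=\omega^\dagger\left(\gamma*f*\widetilde{f}\right)^\dagger$ together with the combinatorial collapse $\left(\gamma*f*\widetilde{f}\right)\omega=\gamma$ coming from the $U$-uniform discreteness of $\Gamma$ and $\omega\equiv 1$ on $\Lambda$ (and, like the paper, you implicitly use the normalisation $(f*\widetilde{f})(0)=1$). The one step you compress in (i) is the last one: your Fubini computation yields $\langle\widehat{\omega}*\nu,g\rangle=\langle\gamma^\dagger,\widecheck{g}\rangle$, which by definition identifies $\reallywidehat{\widehat{\omega}*\nu}=\gamma^\dagger$ rather than directly giving $\langle\widehat{\gamma},g\rangle$; to conclude $\widehat{\gamma}=\widehat{\omega}*\nu$ you still need one application of the double-transform theorem to the (now twice transformable) measure $\widehat{\omega}*\nu$, which is exactly how the paper closes the argument.
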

\begin{proof}

\textbf{(i)}

Since $\gamma$ is Fourier transformable, for each $f \in \Cc(G)$ the measure $\nu:=\widehat{\gamma} \left| \widecheck{f} \right|^2$ is finite, and has Fourier transform as finite measure \cite[Prop~2.3]{ARMA1}, \cite[Lemma~4.9.24]{MoSt} given by
\begin{displaymath}
\widehat{\nu}=\reallywidehat{\widehat{\gamma} \left| \widecheck{f} \right|^2}= \left( \gamma*f*\widetilde{f} \right)^\dagger \,.
\end{displaymath}

Next, since $\widehat{\omega}$ is translation bounded (see \cite[Thm.~2.5]{ARMA1} or \cite[Thm.~4.9.23]{MoSt}) $\widehat{\gamma} \left| \widecheck{f} \right|^2$ is finite, the measures $\widehat{\omega}$ and $\nu$ are convolvable \cite[Thm.~1.2]{ARMA1}, \cite[Lemma 4.9.19]{MoSt}. Moreover, as $\widehat{\omega}$ is Fourier transformable and $\nu$ is finite, their convolution is Fourier transformable and \cite[Lemma 4.9.26]{MoSt}
\begin{equation}\label{EQ1}
\reallywidehat{\widehat{\omega} *\nu}=\widehat{\widehat{\omega}}\widehat{\nu}=\widehat{\widehat{\omega}}\reallywidehat{\left(\widehat{\gamma} \left| \widecheck{f} \right|^2\right)}=\left( \gamma*f*\widetilde{f} \right)^\dagger \omega^\dagger= \left(\left( \gamma*f*\widetilde{f} \right) \omega \right)^\dagger \,.
\end{equation}

We claim that
\begin{displaymath}
\left( \gamma*f*\widetilde{f} \right) \omega =\gamma \,.
\end{displaymath}

Indeed,
\begin{displaymath}
\supp\bigl(\left( \gamma*f*\widetilde{f} \right) \omega \bigr) \subseteq \supp(\gamma*f*\widetilde{f} ) \cap \supp(\omega ) \subseteq (\Lambda +U ) \cap \Gamma =\Lambda \,.
\end{displaymath}
Moreover, for each $x \in \Lambda$, since $\omega(\{x\})=1$ and $\Gamma$ is $U$-uniformly discrete, we have
\begin{displaymath}
\bigl(\left( \gamma*f*\widetilde{f} \right) \omega \bigr) (\{ x\})= \gamma*f*\widetilde{f}(\{ x \}) = \sum_{y \in \Lambda} \left(f*\widetilde{f}\right)(x-y) \gamma(\{ y \}) \,.
\end{displaymath}

The last sum is zero unless $x-y \in U$ and $y \in \Lambda$, which means $y \in \Lambda \cap (x-U)= \{ x \}$. Therefore
\begin{displaymath}
\left( \gamma*f*\widetilde{f} \right) \omega (\{ x\})= \left(f*\widetilde{f}\right)(0) \gamma(\{ x \})= \gamma(\{ x \}) \,.
\end{displaymath}

Now, since $\gamma$ is Fourier transformable, so is $\gamma^\dagger= \reallywidehat{\widehat{\omega} * \left(\widehat{\gamma} \left| \widecheck{f} \right|^2\right)}$. Therefore, by applying \cite[Thm.~4.9.28]{MoSt} to (\ref{EQ1}) we have
\begin{displaymath}
\reallywidehat{\gamma^\dagger}=\reallywidehat{\reallywidehat{\widehat{\omega} * \left(\widehat{\gamma} \left| \widecheck{f} \right|^2\right)}}= \left( \widehat{\omega} * \left(\widehat{\gamma} \left| \widecheck{f} \right|^2 \right)\right)^\dagger \,.
\end{displaymath}

Reflecting this relation we get the claim.

\textbf{(ii)} Exactly as before, since $\omega$ is twice Fourier transformable, $\widehat{\omega}$ is translation bounded and Fourier transformable. Then, as $\nu$ is finite, $\widehat{\omega}*\nu$ is well defined, Fourier transformable, and
\begin{displaymath}
\gamma=\left(\reallywidehat{\widehat{\omega}*\nu} \right)^\dagger = \left( \widehat{\widehat{\omega}} \widehat{\nu} \right)^\dagger= \left(\widehat{\nu} \right)^\dagger \omega \,.
\end{displaymath}

Since $\nu$ is finite, we have $\widehat{\nu} \in \Cu(G)$, and hence

\begin{displaymath}
\supp(\gamma) \subseteq \supp(\omega) \subseteq \Gamma \,.
\end{displaymath}

If $\gamma$ is Fourier transformable, it follows that $\widehat{\omega}*\nu$ is twice Fourier transformable. Then, by \cite[Thm.~3.4]{ARMA}, \cite[Thm.~4.9.28]{MoSt} we get
\begin{displaymath}
\reallywidehat{\reallywidehat{\left(\widehat{\omega}*\nu\right)^\dagger}}= \widehat{\omega}*\nu \,.
\end{displaymath}

This gives $\widehat{\gamma}= \widehat{\omega}*\nu$.

\end{proof}

\begin{remark} In the theory of mathematical diffraction the measures $\widehat{\gamma}$ will always be positive. It follows that all measures $\gamma$ appearing in Lemma~\ref{L1} (ii) will be positive definite, and hence Fourier transformable.
\end{remark}

\begin{remark} If $\Lambda=L$ is a lattice, then one can take $\omega=\delta_{L}$, and hence $\Gamma=L$. Moreover, any $L^0$-periodic measure is a Fourier transform \cite{CRS3}.

In this case, since $\Gamma=L=\Lambda$, Lemma~\ref{L1} gives the well known result (compare \cite{BA,CRS3}) that a Fourier transformable measure $\gamma$ is supported inside $L$ if and only if there exists a finite measure $\nu$ such that
\begin{displaymath}
\widehat{\gamma}= \delta_{L^0} * \nu \,.
\end{displaymath}
\end{remark}

\begin{remark} Let $L \subset G$ be a lattice, let $F \subset G$ be a finite set, and set $\Lambda=L+F$.

Now, if $F'$ is a minimal subset of $F$ with the property that $\Lambda=L+F'$, then we have $\delta_\Lambda=\delta_{F'}*\delta_L$. We can then pick $\omega:= \delta_{\Lambda}$, and hence $\Gamma=\Lambda$.

Lemma~\ref{L1} then yields that a Fourier transformable measure $\gamma$ is supported inside $\Lambda$ if and only if there exists a finite measure $\nu$ such that
\begin{displaymath}
\widehat{\gamma}= \left(\sum_{\chi \in L^0} \bigl(\sum_{f \in F'} \chi(f) \bigr) \delta_\chi \right)  * \nu \,.
\end{displaymath}
\end{remark}

\section{On the Generalized Eberlein decomposition for measures with Meyer set support}

We can now prove the existence of the generalized Eberlein decomposition for Fourier transformable measures with Meyer set support.

\begin{theorem}[\textbf{Existence of the generalized Eberlein decomposition}]\label{T2} Let $G$ be a second countable LCAG, and let $\Lambda \subset G$ be a Meyer set.

Then, for each Fourier transformable measure $\gamma$ with $\supp(\gamma) \subseteq \Lambda$, there exist unique Fourier transformable measures $\gamma_{s}, \gamma_{0s}, \gamma_{0a}$ supported inside a Meyer set, such that
\begin{align*}
\gamma=\gamma_{s} &+ \gamma_{0s} + \gamma_{0a} \\
\reallywidehat{\gamma_s}&= \left(\widehat{\gamma}\right)_{pp}  \\
\reallywidehat{\gamma_{0s}} &=  \left(\widehat{\gamma}\right)_{sc} \\
\reallywidehat{\gamma_{0a}} &=\left(\widehat{\gamma}\right)_{ac} \,.  \\
\end{align*}

Moreover, if $(G,H, \cL)$ is any CPS and $W \subseteq H$ any compact set such that $\Lambda \subseteq \oplam(W)$, then
\begin{displaymath}
\supp(\gamma_{s}), \supp(\gamma_{0s}), \supp(\gamma_{0a}) \subseteq \oplam(W) \,.
\end{displaymath}
\end{theorem}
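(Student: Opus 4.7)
The plan is to transport the Lebesgue decomposition of $\widehat{\gamma}$ back to the physical side using Proposition~\ref{P1} and the Ping-Pong Lemma (Lemma~\ref{L1}). First, apply Proposition~\ref{P1} to select $h \in K_2(H)$ producing a twice Fourier transformable measure $\omega := \omega_h$ with $\omega(\{x\}) = 1$ for all $x \in \Lambda$ and $\Gamma := \supp(\omega)$ a Meyer set; in particular $\Gamma$ is $U$-uniformly discrete for some symmetric open neighbourhood $0 \in U$. Pick $f \in \Cc(G)$ with $\supp(f*\widetilde{f}) \subseteq U$. Applying Lemma~\ref{L1}(i) yields a finite measure $\nu := |\widecheck{f}|^2 \widehat{\gamma}$ on $\widehat{G}$ satisfying
\begin{displaymath}
\widehat{\gamma} = \widehat{\omega} * \nu \,.
\end{displaymath}

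Decompose $\nu = \nu_{pp} + \nu_{ac} + \nu_{sc}$ and set $\mu_j := \widehat{\omega}*\nu_j$ for $j \in \{pp, ac, sc\}$. The key observation is that $\widehat{\omega}$ is a pure point measure supported on a countable Meyer set (namely $\oplam(\supp(\widecheck{h}))$ in the dual CPS), so convolution by $\widehat{\omega}$ preserves the Lebesgue type: $\mu_{pp}$ is pure point as a superposition of translated point masses; $\mu_{ac}$ is absolutely continuous because Haar-null sets remain Haar-null under translation and countable union; $\mu_{sc}$ is continuous (since $\nu_{sc}$ is atomless) and singular (its support lies in a countable sum $D + N$ of a countable set $D$ with a Haar-null set $N$, which is again Haar-null). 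Uniqueness of the Lebesgue decomposition thus forces $\mu_{pp} = (\widehat{\gamma})_{pp}$, $\mu_{ac} = (\widehat{\gamma})_{ac}$, $\mu_{sc} = (\widehat{\gamma})_{sc}$.

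For each component, Lemma~\ref{L1}(ii) produces a measure $\gamma_j := \bigl(\widehat{\mu_j}\bigr)^\dagger$ supported inside the Meyer set $\Gamma$; I relabel these as $\gamma_s, \gamma_{0a}, \gamma_{0s}$. Their sum equals $\gamma$ by double Fourier inversion applied to $\gamma$, using that $\widehat{\gamma} = \widehat{\omega}*\nu$ is Fourier transformable (as seen in the proof of Lemma~\ref{L1}(i)) so $\gamma$ is twice Fourier transformable. The principal technical obstacle is showing that each individual $\gamma_j$ is itself Fourier transformable with $\widehat{\gamma_j} = \mu_j$; my approach is to decompose the complex finite measure $\nu_j$ into positive finite measures $\nu_j^{(k)}$, observe that each $\widehat{\nu_j^{(k)}}$ is a continuous positive definite function, and exploit the fact that $\omega^\dagger \cdot \widehat{\nu_j^{(k)}} = \widehat{\widehat{\omega} * \nu_j^{(k)}}$ is Fourier transformable as a product of a Fourier transformable measure with a Fourier--Stieltjes transform in $B(G)$. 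Hence each $\mu_j$ is twice Fourier transformable, and a second application of double Fourier inversion yields $\gamma_j$ Fourier transformable with $\widehat{\gamma_j} = \mu_j$, completing the desired decomposition.

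Finally, for the $\oplam(W)$ containment and the uniqueness statement, I invoke the moreover clause of Proposition~\ref{P1}: given any $z \in G \setminus \oplam(W)$, there exists a choice of $h$ with $z \notin \Gamma_h$. Since the decomposition $\gamma = \gamma_s + \gamma_{0a} + \gamma_{0s}$ is determined by the Lebesgue decomposition of $\widehat{\gamma}$ together with injectivity of the Fourier transform on Fourier transformable measures, it is independent of the auxiliary $h$; so $\supp(\gamma_j) \subseteq \Gamma_h$ for every admissible $h$, and intersecting over $h$ gives $\supp(\gamma_j) \subseteq \oplam(W)$. The same injectivity argument proves the uniqueness assertion of the theorem.
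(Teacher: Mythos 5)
Your proposal follows the paper's proof almost step for step: Proposition~\ref{P1} and Lemma~\ref{L1} give $\widehat{\gamma}=\widehat{\omega}*\nu$ with $\nu$ finite, the Lebesgue decomposition of $\nu$ is pushed through the convolution with the pure point measure $\widehat{\omega}$, uniqueness of the Lebesgue decomposition identifies $\widehat{\omega}*\nu_{\alpha}$ with $\left(\widehat{\gamma}\right)_{\alpha}$, Lemma~\ref{L1}(ii) produces the candidate measures supported inside $\Gamma$, and the ``choose $\omega$ avoiding $z$'' clause of Proposition~\ref{P1}, combined with the independence of the decomposition from the auxiliary $h$ (injectivity of the Fourier transform), gives the $\oplam(W)$ containment and uniqueness exactly as in the paper. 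The one genuine divergence is the step the paper itself isolates as the technical heart: showing that each $\widehat{\omega}*\nu_{\alpha}$ is \emph{twice} Fourier transformable, i.e.\ that each $\gamma_{j}$ is Fourier transformable. The paper gets this from the integrability criterion of \cite[Thm.~3.10]{CRS}: since $\gamma$ is transformable, $\widecheck{g}\in L^{1}\bigl(|\widehat{\gamma}|\bigr)$ for all $g\in K_{2}(G)$, and $|(\widehat{\gamma})_{\alpha}|\leq|\widehat{\gamma}|$, so the criterion applies to each spectral component separately. You instead write $\reallywidehat{\widehat{\omega}*\nu_{\alpha}}=\omega^{\dagger}\,\widehat{\nu_{\alpha}}$ and appeal to a multiplier theorem asserting that the product of a Fourier transformable measure with a Fourier--Stieltjes transform is again Fourier transformable. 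That route does work --- the multiplier theorem is in \cite{ARMA1} (it is the result whose positive definite corollary the paper invokes later in Lemma~\ref{L4}) --- but it is the entire content of the step and must be cited rather than asserted; the paper's domination argument is more self-contained, needing nothing beyond the transformability of $\gamma$ itself. Everything else in your write-up (countability of $\supp(\widehat{\omega})$ from second countability, preservation of spectral type under convolution with a translation bounded pure point measure, double Fourier inversion to recover $\gamma=\gamma_{s}+\gamma_{0a}+\gamma_{0s}$) matches the paper's argument.
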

\begin{proof}

Now, under the notations from Lemma~\ref{L1}, by Lemma~\ref{L1} there exists a finite measure $\nu$ such that
\begin{displaymath}
\widehat{\gamma}=\widehat{\omega}*\nu \,.
\end{displaymath}

Since $G$ is metrisable, $\widehat{G}$ is $\sigma$ compact. Therefore, the pure point measure $\widehat{\omega}$ has (at most) countable support.

As $\widehat{G}$ is $\sigma$-compact, the measure $\nu$ admits a Lebesgue decomposition
\begin{displaymath}
\nu=\nu_{pp}+\nu_{ac}+\nu_{sc} \,.
\end{displaymath}

Since $\widehat{\omega}$ is pure point measure, the measures $\widehat{\omega}*\nu_{pp}, \widehat{\omega}*\nu_{ac}, \widehat{\omega}*\nu_{sc}$ are pure point, absolutely continuous and singular continuous, respectively. Moreover, we have
\begin{displaymath}
\widehat{\gamma}=\widehat{\omega}*\nu=\widehat{\omega}*\nu_{pp}+\widehat{\omega}*\nu_{ac} +\widehat{\omega}*\nu_{sc}
\end{displaymath}
and therefore, by the uniqueness of the Lebesgue decomposition, we have
\begin{displaymath}
\left(\widehat{\gamma}\right)_{pp}=\widehat{\omega}*\nu_{pp} \quad ; \quad
\left(\widehat{\gamma}\right)_{ac}=\widehat{\omega}*\nu_{ac} \quad ; \quad
\left(\widehat{\gamma}\right)_{sc}=\widehat{\omega}*\nu_{sc} \,.
\end{displaymath}

By Lemma~\ref{L1} (ii), the measures $\widehat{\omega}*\nu_{pp}, \widehat{\omega}*\nu_{ac}, \widehat{\omega}*\nu_{sc}$ are Fourier transformable, and the measures
\begin{align*}
\gamma_{s}&:=\left( \reallywidehat{\widehat{\omega}*\nu_{pp}} \right)^\dagger \\
\gamma_{0a}&:=\left( \reallywidehat{\widehat{\omega}*\nu_{ac}} \right)^\dagger \\
\gamma_{0s}&:=\left( \reallywidehat{\widehat{\omega}*\nu_{sc}} \right)^\dagger \\
\end{align*}
are supported inside $\Gamma$.

 We next show that $\gamma_{s}, \gamma_{0s}$ and $\gamma_{0a}$ are Fourier transformable, or equivalently that $\widehat{\omega}*\nu_{pp} , \widehat{\omega}*\nu_{ac}, \widehat{\omega}*\nu_{sc}$ are twice Fourier transformable. By \cite[Thm.~3.10]{CRS} this is equivalent to the integrability of $\widecheck{g}$ with respect to each of these measures, for all $g \in K_2(G)$. We show below that this is an immediate consequence of the Fourier transformability of $\gamma$.

\smallskip

Since $\gamma$ is Fourier transformable, for each $g \in K_2(G)$ we have $\widecheck{g} \in L^1( \widehat{\gamma}^\dagger)$, and hence  $\widecheck{g} \in L^1( \left(\widehat{\gamma}\right)_{pp}^\dagger) , \widecheck{g} \in L^1( \left(\widehat{\gamma}\right)_{sc}^\dagger)$ and $\widecheck{g} \in L^1( \left(\widehat{\gamma}\right)_{ac}^\dagger)$. Therefore, since $\left(\widehat{\gamma}\right)_{pp}=\widehat{\omega}*\nu_{pp},
\left(\widehat{\gamma}\right)_{ac}=\widehat{\omega}*\nu_{ac}$ and $\left(\widehat{\gamma}\right)_{sc}=\widehat{\omega}*\nu_{sc}$ are Fourier transformable, by \cite[Thm.~3.10]{CRS}  we get that $\widehat{\omega}*\nu_{pp} , \widehat{\omega}*\nu_{ac}, \widehat{\omega}*\nu_{sc}$ are twice Fourier transformable.

This proves the first claim.

\smallskip

Let now $(G,H, \cL)$ be any CPS and $W \subseteq H$ be any compact set such that $\Lambda \subseteq \oplam(W)$. We show that
\begin{displaymath}
\supp(\gamma_{s}), \supp(\gamma_{0s}), \supp(\gamma_{0a}) \subseteq \oplam(W) \,.
\end{displaymath}

Indeed, for each $z \in G \backslash \oplam(W)$, by Proposition~\ref{P1} we can pick $\omega$ in Lemma~\ref{L1} such that $\omega(\{ z \})=0$.

Then, in the above we have $z \notin \Gamma$ and hence
\begin{displaymath}
\gamma_{s}(\{z \})= \gamma_{0s}(\{z \})=\gamma_{0a} (\{z \})= 0 \,.
\end{displaymath}

This proves the claim.
\end{proof}

\begin{definition} We say that a measure $\gamma$ admits a \textbf{generalized Eberlein decomposition} if we can find Fourier transformable measures $\gamma_{s}, \gamma_{0s}, \gamma_{0a}$ such that
\begin{align*}
\gamma=\gamma_{s} &+ \gamma_{0s} + \gamma_{0a} \\
\reallywidehat{\gamma_s}&= \left(\widehat{\gamma}\right)_{pp}  \\
\reallywidehat{\gamma_{0s}} &=  \left(\widehat{\gamma}\right)_{sc} \\
\reallywidehat{\gamma_{0a}} &=\left(\widehat{\gamma}\right)_{ac} \,.  \\
\end{align*}
\end{definition}

The injectivity of the Fourier transform gives that, whenever the generalized Eberlein decomposition exists, it is unique. In Theorem~\ref{T2} above we showed that Fourier transformable measures with Meyer set support always admit generalized Eberlein decomposition.

\smallskip

We complete the section by listing an interesting consequence of Theorem~\ref{T2}.

\begin{corollary}\label{CT2} Let $G$ be a second countable LCAG, and let $\Lambda \subset G$ be a Meyer set.

If $\gamma$ is a Fourier transformable measure with $\supp(\gamma) \subseteq \Lambda$, then
\begin{displaymath}
\left(\widehat{\gamma}\right)_{pp} , \left(\widehat{\gamma}\right)_{sc} , \left(\widehat{\gamma}\right)_{ac} \in \SAP(G) \,.
\end{displaymath}
\end{corollary}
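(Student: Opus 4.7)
The plan is to deduce this corollary as a nearly immediate consequence of Theorem~\ref{T2} combined with Theorem~\ref{AM1}. The key observation is that the generalized Eberlein decomposition provides three Fourier transformable measures $\gamma_{s}, \gamma_{0s}, \gamma_{0a}$ whose Fourier transforms are precisely the three Lebesgue components of $\widehat{\gamma}$, and each of these three measures is supported inside a Meyer set. Since a Meyer set is (uniformly) discrete and any Radon measure supported on a countable discrete set is automatically pure point, each of $\gamma_{s}, \gamma_{0s}, \gamma_{0a}$ is a pure point measure.

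With this in hand I would simply apply Theorem~\ref{AM1} three times. Explicitly, for the singular component I would argue: $\gamma_{0s}$ is Fourier transformable by Theorem~\ref{T2}, its support lies in some Meyer set (hence is locally finite), so $\gamma_{0s}$ is pure point, and therefore by Theorem~\ref{AM1} its Fourier transform $\reallywidehat{\gamma_{0s}} = \left(\widehat{\gamma}\right)_{sc}$ is strong almost periodic. The same argument applied to $\gamma_{s}$ and $\gamma_{0a}$ yields $\left(\widehat{\gamma}\right)_{pp}, \left(\widehat{\gamma}\right)_{ac} \in \SAP(\widehat{G})$.

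There is no real obstacle here; the whole point of the generalized Eberlein decomposition established in Theorem~\ref{T2} was to separate the three spectral pieces into objects whose preimages under the Fourier transform are themselves well behaved measures. The only small point to mention is the justification that measures supported inside a Meyer set are pure point: this follows from the fact that a Meyer set $\Lambda$ is in particular a Delone set, so it is closed and uniformly discrete, hence every Radon measure $\mu$ with $\supp(\mu)\subseteq\Lambda$ decomposes as $\mu = \sum_{x\in\Lambda}\mu(\{x\})\delta_x$. Once this observation is in place, the corollary is just a one-line application of Theorem~\ref{AM1} to each of the three components produced by Theorem~\ref{T2}.
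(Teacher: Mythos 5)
Your proposal is correct and is essentially identical to the paper's own argument: the paper likewise observes that the components $\gamma_{s}, \gamma_{0s}, \gamma_{0a}$ from Theorem~\ref{T2} are pure point (being supported inside Meyer sets) and then applies Theorem~\ref{AM1} to each. Your extra remark justifying why a Radon measure supported inside a Meyer set is pure point is a reasonable fleshing-out of a step the paper leaves implicit.
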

\begin{proof} By Theorem~\ref{T2} , $\gamma$ admits a generalized Eberlein decomposition. Since $\gamma_{s}, \gamma_{0s},\gamma_{0a}$ are pure point, their Fourier transforms are strongly almost periodic measures by Theorem~\ref{AM1}.
\end{proof}

\section{On the norm-almost periodicity of a class of measures}

In the remaining of this paper we prove that given a Fourier transformable measure $\gamma$ with Meyer set support, then its Fourier transform $\widehat{\gamma}$ is a norm almost periodic measure.

We start by proving that, given a cut and project scheme of the form $(G, \RR^d \times H, \cL)$ for LCAG $H$ and some function $h=\phi \otimes \psi \in \Cc(\RR^d \times H)$ where $\phi \in \cS(\RR^d) \cap K_2(\RR^d), \psi \in \Cc(H)$ then, $\omega_h$ is  norm-almost periodic. Here, the product $\phi \otimes \psi $ is defined via
\begin{displaymath}
\phi \otimes \psi(x,y) := \phi(x) \psi(y) \,.
\end{displaymath}

We will complete the paper by showing that, given a Meyer set, we can find a measure $\omega$ satisfying the conditions in Lemma~\ref{L1} such that, by the results in this section, $\widehat{\omega}$ is norm-almost periodic and that norm-almost periodicity is preserved by convolution with finite measures.

\bigskip

First let us recall a result of \cite{NS11} which we need in this section.

\begin{proposition}\label{prop Ap1}\cite[Thm.~5.5.2]{NS11} Let $(G, H, \cL)$ be a CPS and $h \in \Cc(H)$. Then, $\omega_h$ is norm almost periodic.
\end{proposition}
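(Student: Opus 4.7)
The plan is to verify the definition of norm almost periodicity directly: for every $\epsilon>0$, I will exhibit a relatively dense subset of $G$ inside $P_\epsilon^K(\omega_h)$. The natural candidates for almost periods are lattice points $t \in L = \pi_G(\cL)$ whose star image $t^\star \in H$ lies in a small neighborhood of $0$; since the $\star$-map is a group homomorphism on $L$, such a translation amounts to a small shift of the internal argument of $h$. The density of $\pi_H(\cL)$ in $H$ together with the lattice structure of $\cL$ in $G \times H$ will then guarantee there are enough such $t$.

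First I would compute, for $t \in L$, that $T_t \omega_h$ is supported on $L$ and, using that $\star$ is a homomorphism,
\[
T_t \omega_h - \omega_h \;=\; \sum_{y \in L} \bigl(h(y^\star - t^\star) - h(y^\star)\bigr)\, \delta_y .
\]
Fix now a compact $K \subseteq G$ with non-empty interior, let $W := \supp(h)$, and pick a symmetric precompact open neighborhood $U \subseteq H$ of $0$. Set $V := W + \overline{U}$, which is compact. The key structural input is that $\oplam(V)$ is a Meyer set in $G$, hence uniformly discrete, so
\[
N \;:=\; \sup_{s \in G} \#\bigl(\oplam(V) \cap (s+K)\bigr) \;<\; \infty .
\]
Since $h$ is continuous with compact support, it is uniformly continuous, and I can then choose an open symmetric neighborhood $U' \subseteq U$ of $0 \in H$ such that $|h(z) - h(z')| < \epsilon/N$ whenever $z - z' \in U'$.

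For any $t \in L$ with $t^\star \in U'$, every $y \in L$ contributing a nonzero term in the formula above must satisfy $y^\star \in W \cup (W + t^\star) \subseteq V$, and therefore
\[
|T_t \omega_h - \omega_h|(s+K) \;\leq\; \sum_{y \in \oplam(V) \cap (s+K)} |h(y^\star - t^\star) - h(y^\star)| \;\leq\; N \cdot \tfrac{\epsilon}{N} \;=\; \epsilon
\]
uniformly in $s \in G$, whence $\|T_t \omega_h - \omega_h\|_K \leq \epsilon$. To close the argument I observe that $\{t \in L : t^\star \in U'\} = \oplam(U')$ is relatively dense in $G$, since $U'$ has non-empty interior and $\cL$ is a lattice in $G \times H$ (a standard cut-and-project fact). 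I do not expect a serious obstacle here; the proof is a direct verification, and the only delicate point is the order of quantifiers — the constant $N$, which depends on $K$ and $V$, must be fixed \emph{before} the modulus-of-continuity neighborhood $U'$ is chosen, so that the bound $N \cdot (\epsilon/N)$ holds uniformly in $s$.
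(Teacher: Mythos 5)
The paper does not prove this proposition at all: it is imported verbatim from \cite[Thm.~5.5.2]{NS11}, so there is no internal argument to compare against. Your proposal is a correct, self-contained proof, and it follows the standard line of reasoning (translate by $t\in L$ with $t^\star$ small, use uniform continuity of $h$ in the internal space, and control the number of contributing points per translate of $K$). The identity $T_t\omega_h-\omega_h=\sum_{y\in L}\bigl(h(y^\star-t^\star)-h(y^\star)\bigr)\delta_y$ is right, the nonzero coefficients are indeed confined to $\oplam(V)$ with $V=W+\overline{U}$, and the set $\oplam(U')$ of candidate almost periods is relatively dense because $U'$ has non-empty interior and $\pi_H(\cL)$ is dense. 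You also handle the quantifier order correctly: $N$ is fixed from $K$ and $V$ before $U'$ is chosen. One small imprecision: $\oplam(V)$ for compact $V$ need not be relatively dense, hence need not be a Meyer set in the sense of this paper (Meyer sets are required to be Delone); what you actually need and what is true is only that $\oplam(V)$ is uniformly discrete, which follows from discreteness of $\cL$ and injectivity of $\pi_G|_{\cL}$, and which already gives the finiteness of $N$. With that wording corrected (and, if you want the strict inequality in $P_\epsilon^K$, running the estimate with $\epsilon/2$), the argument is complete.
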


In the Lemma~\ref{lemma AP1} below, we show that for CPS with $H=\RR^d \times H_1$ and $h= \phi \otimes \psi$ for some Schwartz function $\phi \in \cS(\RR^d)$ and some $\psi \in \Cc(H_1)$ then $\omega_h$ is a measure, and we give an upperbound for its norm $\| \omega_h \|_K$

\begin{lemma}\label{lemma AP1} Let $(G, \RR^d \times H_1, \cL)$ be a CPS. Then,
\begin{itemize}
\item[(i)] Let $\phi \in \cS(\RR^d), \psi \in \Cc(H_1)$ and let $h=\phi \otimes \psi$. Then, $\omega_h$ is a measure.
\item[(ii)] For each compact $K \subseteq G$ and compact $W \subseteq H_1$, there exists some constant $C=C(K,W)$ such that, for all $\phi \in \cS(\RR^d), \psi \in \Cc(H_1)$ with $\supp(\psi) \subseteq W$ we have
\begin{displaymath}
\| \omega_{\phi \otimes \psi} \|_K \leq C \|\psi\|_\infty \bigl\| (1+x^{2d}) \phi(x) \bigr\|_\infty \,.
\end{displaymath}
\end{itemize}
\end{lemma}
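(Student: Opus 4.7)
The plan is to reduce both assertions to a single uniform counting-plus-weight estimate. For any compact $K \subseteq G$ and $W \subseteq H_1$, define
\[
S_t := \sum_{\substack{(x,(r,y^\star)) \in \cL \\ x \in t+K,\ y^\star \in W}} \frac{1}{1+|r|^{2d}}, \qquad t \in G.
\]
Once we know $\sup_{t \in G} S_t \leq C(K,W) < \infty$, part (ii) is immediate from the pointwise bound $|\phi(r)\psi(y^\star)| \leq \|\psi\|_\infty \, 1_W(y^\star) \, \|(1+|r|^{2d})\phi\|_\infty/(1+|r|^{2d})$, which gives $|\omega_{\phi\otimes\psi}|(t+K) \leq C \|\psi\|_\infty \|(1+|r|^{2d})\phi\|_\infty$ uniformly in $t$. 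In particular, $\omega_{\phi\otimes\psi}$ then has finite variation on every compact set and so defines a Radon measure, giving (i). Thus everything comes down to bounding $S_t$ uniformly in $t$.

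To control $S_t$, I tile $\RR^d$ by unit cubes $Q_n := n + [0,1)^d$ indexed by $n \in \ZZ^d$, and put
\[
N_n(t) := \#\{(x,(r,y^\star)) \in \cL : x \in t+K,\ r \in Q_n,\ y^\star \in W\}.
\]
Fix a relatively compact measurable fundamental domain $F$ of the lattice $\cL \subseteq G \times \RR^d \times H_1$, and choose relatively compact sets $F_1 \subseteq G$, $F_2 \subseteq \RR^d$, $F_3 \subseteq H_1$ with $F \subseteq F_1 \times F_2 \times F_3$. The disjoint union $\bigsqcup_{p \in A \cap \cL}(p+F) \subseteq A+F$ gives the standard bound $|A \cap \cL| \leq \Vol(A+F)/\Vol(F)$ for any Borel set $A$. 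Applied to $A = (t+K) \times Q_n \times W$ and using translation invariance of Haar measure on each factor,
\[
N_n(t) \leq \frac{\Vol(K+F_1)\,\Vol([0,1)^d+F_2)\,\Vol(W+F_3)}{\Vol(F)} =: N(K,W),
\]
uniformly in $t \in G$ and $n \in \ZZ^d$.

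For the weight factor, on $Q_n$ we have $|r| \geq \max(|n|-\sqrt{d},0)$, so there is an absolute constant $c_d$ with $1/(1+|r|^{2d}) \leq c_d/(1+|n|^{2d})$. Summing over cubes gives
\[
S_t \leq N(K,W)\, c_d \sum_{n \in \ZZ^d} \frac{1}{1+|n|^{2d}},
\]
and this series converges by comparison with $\int_{\RR^d} dx/(1+|x|^{2d}) < \infty$ (spherical volume grows like $|x|^{d-1}$ while the integrand decays like $|x|^{-2d}$, and $2d-(d-1)>1$ suffices). This yields the desired uniform bound on $S_t$ and completes both parts.

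The only real obstacle is the uniformity of the lattice count in $t$ and $n$; the resolution above uses nothing beyond translation invariance of Haar measure and the existence of a relatively compact fundamental domain for $\cL$, both standard features of lattices in second countable LCAGs. The remaining steps are routine Schwartz-decay bookkeeping.
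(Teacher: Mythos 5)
Your proof is correct and follows essentially the same route as the paper: tile $\RR^d$ by unit cubes, trade the Schwartz decay of $\phi$ for the summable weights $1/(1+|n|^{2d})$, and bound the number of points of $\cL$ in $(t+K)\times Q_n\times W$ uniformly in $t$ and $n$, which simultaneously gives local finiteness (part (i)) and the norm estimate (part (ii)). The only cosmetic difference is that you justify the uniform lattice count by a fundamental-domain volume argument, whereas the paper simply invokes the translation boundedness of $\delta_{\cL}$ through the finiteness of $\| \delta_{\cL} \|_{K \times [-\frac{1}{2},\frac{1}{2}]^d \times W}$.
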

\begin{proof}

Let $W \subseteq H_1$ be any compact set, and let $\psi \in \Cc(H)$ be any function such that $\supp(\psi) \subseteq W$. Set

\begin{displaymath}
C(W,K):=\left( \| \delta_{\cL} \|_{K \times [-\frac{1}{2}, \frac{1}{2}]^d \times W} \right) \left( \sum_{n \in \ZZ^d}  \sup_{z \in n+[-\frac{1}{2},\frac{1}{2}]^d } \frac{1}{1+|z|^{2d}} \right)  < \infty \,.
\end{displaymath}

To prove that $\omega_h$ is a measure, we show that for each compact set $K$ we have
\begin{displaymath}
\sum_{x \in K} \left| \omega_h(\{ x \}) \right| < \infty \,.
\end{displaymath}

A simple computation yields
\begin{align*}
\sum_{x \in K} \left| \omega_h(\{ x \}) \right| =\sum_{(x,x^\star) \in \cL} \left| 1_{K}(x) h(x^\star) \right| =\sum_{(x,x^\star) \in \cL} \left| 1_{K}(x) \left(\phi \otimes \psi\right) (x^\star) \right| \,.
\end{align*}

Now, consider the canonical projections $\pi_{\RR^d}: \RR^d \times H_1 \to \RR^d$ and $\pi_{H_1}: \RR^d \times H_1 \to H_1$ , respectively.
For each $(x,x^\star) \in \cL$ we will denote, for simplicity, by $x_1^{\star}:=\pi_{\RR^d}(x^\star)$ and $x_2^{\star}:=\pi_{H_1}(x^\star)$.

Then
\label{EQ norm comp}
\begin{align}
&\sum_{x \in K} \left| \omega_h(\{ x \}) \right| =\sum_{(x,x^\star) \in \cL} \left| 1_{K}(x) \phi(x_1^\star) \psi (x_2^\star) \right| \\
&=\sum_{(x,x^\star) \in \cL}  1_{K}(x) \left| \phi(x_1^\star)\right| \left| \psi (x_2^\star) \right| \leq \sum_{(x,x^\star) \in \cL}  1_{K}(x) \left| \phi(x_1^\star)\right| \| \psi\|_\infty 1_{W} (x_2^\star) \nonumber\\
&\leq \sum_{ m \in \ZZ^d } \bigl( \sum_{\substack{(x,x^\star) \in \cL \\ x_1^\star \in m+[-\frac{1}{2} , \frac{1}{2}]^d }}  1_{K}(x) \left| \phi(x_1^\star)\right|  \| \psi\|_\infty 1_{W} (x_2^\star) \bigr) \nonumber\\
&= \| \psi\|_\infty  \sum_{ m \in \ZZ^d } \bigl( \sum_{(x,x^\star) \in \cL}  1_{K}(x) 1_{m+[-\frac{1}{2} , \frac{1}{2}]^d }(x_1^\star) \left| \phi(x_1^\star)\right|  1_{W} (x_2^\star) \bigr) \nonumber\\
&= \| \psi\|_\infty  \sum_{ m \in \ZZ^d } \bigl( \sum_{(x,x\star) \in \cL}  1_{K}(x) 1_{m+[-\frac{1}{2} , \frac{1}{2}]^d }(x_1^\star) \left| \phi(x_1^\star)(1+|x_1^\star|^{2d})\right| \left| \frac{1}{1+|x_1^\star|^{2d}} \right| 1_{W} (x_2^\star) \bigr) \nonumber\\
&\leq \| \psi\|_\infty  \sum_{ m \in \ZZ^d } \bigl( \sum_{(x,x^\star)\in \cL}  1_{K}(x) 1_{m+[-\frac{1}{2} , \frac{1}{2}]^d }(x_1^\star) \| \phi(s)(1+|s|^{2d}) \|_\infty \left( \sup_{z \in m+[-\frac{1}{2},\frac{1}{2}]^d } \frac{1}{1+|z|^{2d}} \right) 1_{W} (x_2^\star) \bigr) \nonumber\\
&= \| \psi\|_\infty  \| \phi(s)(1+|s|^{2d}) \|_\infty \sum_{ m \in \ZZ^d } \bigl( \sum_{(x,x^\star) \in \cL}  1_{K}(x) 1_{m+[-\frac{1}{2} , \frac{1}{2}]^d }(x_1^\star)  \left( \sup_{z \in m+[-\frac{1}{2},\frac{1}{2}]^d } \frac{1}{1+|z|^{2d}} \right) 1_{W} (x_2^\star) \bigr) \nonumber\\
&= \| \psi\|_\infty  \| \phi(s)(1+|s|^{2d})   \|_\infty \left(\sum_{ m \in \ZZ^d }  \sup_{z \in m+[-\frac{1}{2},\frac{1}{2}]^d } \frac{1}{1+|z|^{2d}}  \bigr) \bigl( \sum_{(x,x_1^\star, x_2^\star) \in \cL}  1_{K}(x) 1_{m+[-\frac{1}{2} , \frac{1}{2}]^d } (x_1^\star) 1_{W} (x_2^\star) \right) \nonumber\\
&\leq \left(  \sum_{ m \in \ZZ^d }  \sup_{z \in m+[-\frac{1}{2},\frac{1}{2}]^d } \frac{1}{1+|z|^{2d}}  \right)  \bigl( \delta_{\cL}  ( (t+K) \times  (m+[-\frac{1}{2} , \frac{1}{2}]^d ) \times W )\bigr) \| \psi\|_\infty  \| \phi(s)(1+|s|^d) \|_\infty  \nonumber\\
\nonumber
\end{align}

This shows that
\begin{displaymath}
\sum_{x \in K} \left| \omega_h(\{ x \}) \right| < \infty \,,
\end{displaymath}
and proves \textbf{(i)}.

We now show \textbf{(ii)}

For each compact set $K$ we have
\begin{align*}
  &\| \omega_{h} \|_K  = \sup_{t \in G} \left| \omega_h \right| (t+K) \\
  &\stackrel{(\ref{EQ norm comp})}{\leq}  \| \psi\|_\infty  \| \phi(s)(1+|s|^d)\|_\infty \left( \sum_{ m \in \ZZ^d } \bigl( \sup_{z \in m+[-\frac{1}{2},\frac{1}{2}]^d } \frac{1}{1+|z|^{2d}}  \bigr)  \bigl( \delta_{\cL}  ( (t+K) \times  (m+[-\frac{1}{2} , \frac{1}{2}]^d ) \times W )\bigr)\right) \\
   & \leq  \bigl(  \sum_{ m \in \ZZ^d }  \sup_{z \in m+[-\frac{1}{2},\frac{1}{2}]^d } \frac{1}{1+|z|^{2d}}  \bigr)  \| \psi\|_\infty  \| \phi(s)(1+|s|^d)\|_\infty  \| \delta_{\cL} \|_{  K \times  [-\frac{1}{2} , \frac{1}{2}]^d  \times W } \\
   &\leq C(K,W)  \| \psi\|_\infty  \| \phi(s)(1+|s|^d) \|_\infty \,.
\end{align*}

This completes the proof.

\end{proof}

\begin{lemma}\label{lemma AP2} Let $(G, \RR^d \times H_1, \cL)$ be a CPS. Let $\phi \in \cS(\RR^d), \psi \in \Cc(H_1)$ and set $h=\phi \otimes \psi$. Then $\omega_h$ is norm almost periodic.

\end{lemma}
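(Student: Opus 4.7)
The plan is to approximate the Schwartz factor $\phi$ by compactly supported continuous functions, apply Proposition~\ref{prop Ap1} to each approximant, and pass to the $\|\cdot\|_K$-limit using the quantitative bound of Lemma~\ref{lemma AP1}(ii) together with the closedness of the norm almost periodic class under $\|\cdot\|_K$ convergence.

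First I would fix a cutoff $\eta \in \Cc(\RR^d)$ with $0 \leq \eta \leq 1$ and $\eta \equiv 1$ on the closed unit ball, and set $\phi_n(x) := \eta(x/n)\phi(x)$ for each $n \in \NN$. Then $\phi_n \in \Cc(\RR^d)$, so $h_n := \phi_n \otimes \psi \in \Cc(\RR^d \times H_1)$, and Proposition~\ref{prop Ap1} applied to the CPS $(G, \RR^d \times H_1, \cL)$ yields that $\omega_{h_n}$ is norm almost periodic for every $n$. Since $\phi$ is Schwartz, $(1+|x|^{2d+1})|\phi(x)|$ is bounded, hence $(1+|x|^{2d})|\phi(x)| \to 0$ as $|x| \to \infty$. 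Because $\phi - \phi_n = (1-\eta(\cdot/n))\phi$ vanishes on the ball of radius $n$ and $|1-\eta(x/n)| \leq 1$, one obtains
\begin{displaymath}
\|(1+|x|^{2d})(\phi(x) - \phi_n(x))\|_\infty \;\leq\; \sup_{|x| \geq n}(1+|x|^{2d})|\phi(x)| \;\longrightarrow\; 0 \quad (n \to \infty).
\end{displaymath}
Applying Lemma~\ref{lemma AP1}(ii) to $(\phi - \phi_n) \otimes \psi$, whose second factor is still supported in $W := \supp(\psi)$, then gives
\begin{displaymath}
\|\omega_h - \omega_{h_n}\|_K = \|\omega_{(\phi - \phi_n) \otimes \psi}\|_K \;\leq\; C(K,W)\,\|\psi\|_\infty \,\|(1+|x|^{2d})(\phi - \phi_n)\|_\infty \;\longrightarrow\; 0,
\end{displaymath}
so $\omega_{h_n} \to \omega_h$ in the $\|\cdot\|_K$-norm.

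Finally, I would close the proof by invoking the fact that the class of norm almost periodic measures is closed in $(\cM^\infty(G), \|\cdot\|_K)$: if $n$ is chosen so that $\|\omega_h - \omega_{h_n}\|_K < \epsilon/3$, then by the triangle inequality $P^K_{\epsilon/3}(\omega_{h_n}) \subseteq P^K_{\epsilon}(\omega_h)$, and relative denseness transfers to $\omega_h$. The main obstacle is really just this closure statement, since the Schwartz estimate and the algebraic manipulation are routine; the heart of the argument is the reduction to the compactly supported case via Lemma~\ref{lemma AP1}(ii), which is precisely what converts Schwartz decay into quantitative $\|\cdot\|_K$-control and thus allows one to leverage Proposition~\ref{prop Ap1}.
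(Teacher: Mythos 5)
Your proposal is correct and follows essentially the same route as the paper: split $\phi$ into a compactly supported piece plus a Schwartz remainder with small $(1+|x|^{2d})$-weighted sup norm, control the remainder's contribution via Lemma~\ref{lemma AP1}(ii), and transfer the relatively dense set of almost periods from $\omega_{\phi_1\otimes\psi}$ to $\omega_h$ by the triangle inequality. The only cosmetic difference is that you phrase the last step as closedness of the norm almost periodic class under $\|\cdot\|_K$-limits, which is exactly the $\epsilon/3$ argument the paper carries out explicitly.
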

\begin{proof}

Fix compact sets $K \subseteq G$ and $W \subseteq H$ such that $\supp(\psi) \subseteq W$.

Let $\epsilon >0$, and let $C=C(K,W)$ be the constant from Lemma~\ref{lemma AP1} (ii).

Since $\phi \in \cS(\RR^d)$ we can write it as $\phi=\phi_1+\phi_2$, with $\phi_1 \in \Cc(\RR^d)$ and
\begin{displaymath}
\| (1+|s|^{2d}) \phi_2(s) \|_\infty \leq \frac{\epsilon}{1+3 C \|\psi\|_\infty \| \delta_{\cL} \|_{K_1 \times [-\frac{1}{2},\frac{1}{2}]^d \times W} } \,.
\end{displaymath}

Then, by Lemma~\ref{lemma AP1} we have
\begin{displaymath}
\| \omega_{ \phi_2 \otimes \psi} \|_{K} < \frac{\epsilon}{3} \,.
\end{displaymath}

Next, by Proposition~\ref{prop Ap1} the measure $\omega_{\phi_1 \otimes \psi}$ is norm almost periodic. Therefore, the set
\begin{displaymath}
P:= \{ t \in G | \| T_t \omega_{\phi_1 \otimes \psi} - \omega_{\phi_1 \otimes \psi} \|_K < \frac{\epsilon}{3} \}
\end{displaymath}
is relatively dense.

Let $t \in P$. Then
\begin{align*}
 \| T_t \omega_{h} - \omega_{h} \|_K & =\| T_t \omega_{\phi_1 \otimes \psi} +T_t \omega_{\phi_2 \otimes \psi}- \omega_{\phi_1 \otimes \psi}- \omega_{\phi_2 \otimes \psi} \|_K \\
  & \leq \| T_t \omega_{\phi_1 \otimes \psi} - \omega_{\phi_1 \otimes \psi}\|_K +\| T_t \omega_{\phi_2 \otimes \psi}\|_K+| \omega_{\phi_2 \otimes \psi} \|_K \\
  & \leq \| T_t \omega_{\phi_1 \otimes \psi} - \omega_{\phi_1 \otimes \psi}\|_K +\| \omega_{\phi_2 \otimes \psi}\|_K+| \omega_{\phi_2 \otimes \psi} \|_K <\epsilon
\end{align*}

This shows that each $t \in P$ is an $\epsilon$-norm almost period for $\omega_h$, from which our claim follows.
\end{proof}

\section{Norm almost periodicity and convolution}

In this section we show that norm almost periodicity is preserved under convolution with finite measures. We start by proving the following preparation Lemma.

\begin{lemma}\label{conv finite estimate} Let $K, K'$ be compact sets such that $K \subset \mbox{Int} (K')$.
If $\nu$ is a finite measure and $\mu$ is any translation bounded measure, then
\begin{displaymath}
\left| \mu* \nu \right|(K) \leq   \| \mu \|_{K'} \left( \left| \nu \right|(G)\right) \,.
\end{displaymath}
\end{lemma}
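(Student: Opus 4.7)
The plan is to reduce everything to a Fubini/Tonelli computation on the product $|\mu| \otimes |\nu|$, after first replacing the indicator $1_K$ by a continuous test function.

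First I would invoke Urysohn's lemma. Since $K$ is compact and $\mbox{Int}(K')$ is open with $K \subset \mbox{Int}(K')$, there exists $f \in \Cc(G)$ with $0 \leq f \leq 1$, $f \equiv 1$ on $K$, and $\supp(f) \subseteq K'$. Then by monotonicity of the variation,
\begin{displaymath}
\left| \mu * \nu \right|(K) \leq \int f \, \dd \left| \mu * \nu \right| \leq \int f \, \dd \bigl( \left| \mu \right| * \left| \nu \right| \bigr) \, ,
\end{displaymath}
where in the second step I use the standard fact that $\left| \mu * \nu \right| \leq \left| \mu \right| * \left| \nu \right|$ for the convolution of a translation bounded measure with a finite measure (this is where translation boundedness of $\mu$ and finiteness of $\nu$ guarantee the convolution is defined as a Radon measure).

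Next I would unfold the convolution via Fubini/Tonelli applied to the positive integrand $f(x+y)$ against the product measure $\left|\mu\right| \otimes \left|\nu\right|$:
\begin{displaymath}
\int f \, \dd \bigl( \left| \mu \right| * \left| \nu \right| \bigr) = \int_G \int_G f(x+y) \, \dd \left| \mu \right|(x) \, \dd \left| \nu \right|(y) \, .
\end{displaymath}
For each fixed $y \in G$, the translate $x \mapsto f(x+y)$ is supported in $K' - y$ and bounded by $1$, so
\begin{displaymath}
\int_G f(x+y) \, \dd \left| \mu \right|(x) \leq \left| \mu \right|(K' - y) = \left| \mu \right|\bigl( (-y) + K' \bigr) \leq \| \mu \|_{K'} \, ,
\end{displaymath}
the last inequality being the very definition of $\| \mu \|_{K'}$.

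Substituting this pointwise bound back into the Fubini expression and integrating in $y$ yields
\begin{displaymath}
\left| \mu * \nu \right|(K) \leq \| \mu \|_{K'} \int_G \dd \left| \nu \right|(y) = \| \mu \|_{K'} \left| \nu \right|(G) \, ,
\end{displaymath}
which is the desired estimate. There is no real obstacle here; the only point to be careful about is why one may write $\int f \dd |\mu * \nu| \leq \int f \dd (|\mu| * |\nu|)$, and why the hypothesis $K \subset \mbox{Int}(K')$ (rather than merely $K \subseteq K'$) is used, which is precisely what allows the Urysohn step that replaces the (generally non-continuous) indicator $1_K$ by a bona fide $\Cc(G)$ test function sandwiched between $1_K$ and $1_{K'}$.
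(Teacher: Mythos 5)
Your proof is correct and follows essentially the same route as the paper's: Urysohn's lemma to replace $1_K$ by a continuous function sandwiched between $1_K$ and $1_{K'}$, then a Fubini-type unfolding of the convolution and the translation-boundedness bound $|\mu|(-y+K') \le \|\mu\|_{K'}$. The only cosmetic difference is that you invoke the inequality $|\mu*\nu| \le |\mu|*|\nu|$ as a known fact, whereas the paper proves exactly that step inline via the variational definition of $|\mu*\nu|$ (choosing $g \in \Cc(G)$ with $|g| \le f$ and an $\epsilon$-approximation).
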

\begin{proof}

Since $K \subset \mbox{Int} (K')$, there exists a non-negative $f$ such that $1_{K} \leq f \leq 1_{K'}$.

Let $\epsilon >0$ be arbitrary.

By the definition of the total variation, there exists some $g \in \Cc(G)$ with $|g| \leq f$ such that
\begin{displaymath}
| \mu * \nu | (f)  \leq \left| \mu*\nu (g) \right| +\epsilon \,.
\end{displaymath}

Then, we have
\begin{align*}
&\left| \mu* \nu \right|(K)  \leq | \mu * \nu | (f)  \leq \left| \mu*\nu (g) \right| +\epsilon = \left|  \int_{G} \int_{G} g(s+t) \dd \nu(s) \dd \mu(t)\right| +\epsilon \\
&\leq \int_{G}   \int_{G}  \left| g(s+t)\right| \dd  \left| \nu\right|(s) \dd  \left| \mu\right|(t) +\epsilon = \int_{G} \left(   \int_{G}  \left| g(s+t)\right| \dd  \left| \mu\right|(t) \right) | \dd  \left| \nu\right|(s) +\epsilon \\
\end{align*}

Now, for each $s \in G$ we have $g(s+t)=0$ for all $t \notin -s+K'$. Therefore
\begin{align*}
& \int_{G}  \left| g(s+t)\right| \dd  \left| \mu\right|(t) = \int_{-s+K'}  \left| g(s+t)\right| \dd  \left| \mu\right|(t)  \\
   & \leq \int_{-s+K'}  1 \dd  \left| \mu\right|(t)  = \left| \mu \right| (-s+K') \leq \| \mu \|_{K'}  \,. \\
\end{align*}

Therefore
\begin{displaymath}
\left| \mu* \nu \right|(K) \leq  \int_{G} \| \mu \|_{K'}  \, \dd  \left| \nu\right|(s) +\epsilon = \| \mu \|_{K'}  \left( \left| \nu \right|(G)\right) +\epsilon \,.
\end{displaymath}

As $\epsilon >0$ was arbitrary, we are done.

\end{proof}

We can now prove the desired result.

\begin{proposition}\label{conv preserves NAP} If $\nu$ is a finite measure and $\mu$  is a norm almost periodic, then $\mu*\nu$ is norm almost periodic.
\end{proposition}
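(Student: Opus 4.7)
The plan is to reduce the statement directly to Lemma~\ref{conv finite estimate} via the identity $T_t(\mu*\nu) - \mu*\nu = (T_t\mu - \mu)*\nu$, which holds by linearity of convolution (recalling that $\mu$ is translation bounded, since any norm almost periodic measure lies in $\cM^\infty(G)$, and $\nu$ is finite, so the convolution is well defined and commutes with translation on either factor).

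Concretely, fix a compact $K \subseteq G$ with non-empty interior. Choose a second compact set $K'$ with $K \subset \mbox{Int}(K')$ (e.g.\ $K' = K + \overline{V}$ for a relatively compact symmetric neighbourhood $V$ of $0$). We may assume $|\nu|(G) > 0$, as otherwise $\nu=0$ and $\mu*\nu=0$ is trivially norm almost periodic. Given $\epsilon > 0$, set $\epsilon' := \epsilon / |\nu|(G)$. Because $\mu$ is norm almost periodic, the set
\begin{displaymath}
P := \{ t \in G : \| T_t\mu - \mu \|_{K'} < \epsilon' \}
\end{displaymath}
is relatively dense. The claim is that $P \subseteq P_\epsilon^K(\mu*\nu)$, which is exactly what we need.

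To verify the inclusion, fix $t \in P$ and set $\mu_t := T_t\mu - \mu$, so $\mu_t$ is translation bounded with $\|\mu_t\|_{K'} < \epsilon'$. For any $s \in G$ we have $s+K \subset \mbox{Int}(s+K')$, so applying Lemma~\ref{conv finite estimate} with $K$, $K'$ replaced by $s+K$, $s+K'$ gives
\begin{displaymath}
|\mu_t * \nu|(s+K) \leq \|\mu_t\|_{s+K'}\,|\nu|(G) = \|\mu_t\|_{K'}\,|\nu|(G),
\end{displaymath}
where in the last equality we used the translation invariance of the norm $\|\cdot\|_{K'}$ (it is defined as a supremum over all translates). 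Taking the supremum over $s \in G$ yields
\begin{displaymath}
\| T_t(\mu*\nu) - \mu*\nu \|_K = \| \mu_t * \nu \|_K \leq \| \mu_t \|_{K'}\,|\nu|(G) < \epsilon' \, |\nu|(G) = \epsilon,
\end{displaymath}
so $t \in P_\epsilon^K(\mu*\nu)$, and the set of $\epsilon$-norm almost periods of $\mu*\nu$ is relatively dense.

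There is essentially no obstacle: Lemma~\ref{conv finite estimate} was designed precisely for this application. The only subtleties to keep in mind are (a) the need to enlarge $K$ to $K'$ before invoking the lemma, and (b) the fact that $\|\cdot\|_{K'}$ is translation invariant, which lets us convert a bound on a single translate into a bound on the full norm of the convolution.
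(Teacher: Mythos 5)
Your proof is correct and follows essentially the same route as the paper: rewrite $T_t(\mu*\nu)-\mu*\nu$ as $(T_t\mu-\mu)*\nu$, bound $|(T_t\mu-\mu)*\nu|(s+K)$ via Lemma~\ref{conv finite estimate} applied to the translated pair $s+K \subset \mathrm{Int}(s+K')$, use translation invariance of $\|\cdot\|_{K'}$, and take the supremum over $s$. If anything, your version is slightly cleaner, since you define the set of almost periods directly with the $\|\cdot\|_{K'}$ norm (legitimate, as norm almost periodicity is independent of the chosen compact set with non-empty interior), which is exactly the norm needed in the final estimate.
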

\begin{proof}

Fix compact set $K_1 \subset G$ with non-empty interior. Pick $K_1'$ any compact set such that $K_1 \subseteq \mbox{Int}(K_1')$ .

Let $\epsilon >0$. Since $\mu$ is norm almost periodic, the set
\begin{displaymath}
P:= \{ t \in G | \| T_t \mu - \mu \|_{K_1} < \frac{\epsilon}{1+2  \left( \left| \nu \right|(G)\right)} \}
\end{displaymath}
is relatively dense. Then, for all $t \in P$ and $s \in G$ we have
\begin{displaymath}
\left| T_t \mu* \nu - \mu* \nu \right|(s+K_1) = \left| (T_t \mu-\mu) * \nu \right|(s+K_1)
\end{displaymath}

Now, since $s+K_1 \subseteq \mbox{Int}(s+K_1')$, by applying Lemma~\ref{conv finite estimate} for $K=s+K_1$, we get
\begin{align*}
\left| T_t \mu* \nu - \mu* \nu \right|(s+K_1) &\leq  \left( \left| \nu \right|(G)\right) \left( \| T_t \mu- \mu \|_{s+K_1'} \right) \\
&\leq  \left( \left| \nu \right|(G)\right) \left( \| T_t \mu- \mu \|_{K_1'} \right) < \frac{\epsilon}{2} \,.
\end{align*}

Now, taking the supremum over all $s$, we get that for all $t$ in the relatively dense set $P$ we have
\begin{displaymath}
\| T_t \mu* \nu - \mu* \nu \|_{K_1}  \leq  \frac{\epsilon}{2} < \epsilon  \,.
\end{displaymath}
\end{proof}

\section{On the norm almost periodicity of $\widehat{\gamma}$ for measures $\gamma$ with Meyer set support}

We can finally prove the following theorem.

\begin{theorem}\label{diff is NAP} Let $\gamma$ be a Fourier transformable measure supported inside a Meyer set. Then $\widehat{\gamma}$ is norm almost periodic.
\end{theorem}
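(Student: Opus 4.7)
The plan is to reduce the theorem, via the Ping-Pong Lemma, to the construction of a single auxiliary measure whose Fourier transform is norm almost periodic. More precisely, it suffices to exhibit, for any Meyer set $\Lambda$ containing $\supp(\gamma)$, a measure $\omega$ which is twice Fourier transformable, has uniformly discrete support, satisfies $\omega(\{x\}) = 1$ for all $x \in \Lambda$, and whose Fourier transform $\widehat{\omega}$ is norm almost periodic. Given such $\omega$, Lemma~\ref{L1}(i) provides a finite measure $\nu$ on $\widehat{G}$ with $\widehat{\gamma} = \widehat{\omega} * \nu$, and Proposition~\ref{conv preserves NAP} immediately gives that $\widehat{\gamma}$ is norm almost periodic.

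To construct such an $\omega$, I would refine the construction of Proposition~\ref{P1} by choosing both the CPS and the function $h \in K_2(H)$ with a tensor-product structure suited to Lemma~\ref{lemma AP2}. First, by pre-processing the Meyer set embedding (via the structure theorem for compactly generated LCAGs and the standard trick of converting any free abelian $\ZZ^r$-factor in the internal space into an extra $\RR^r$-factor by embedding $\ZZ^r$ densely in $\RR^r$ and adjusting the lattice), I would arrange $\Lambda \subseteq \oplam(W)$ inside a CPS $(G, H, \cL)$ with $H = \RR^d \times K$ for some compact abelian $K$. The dual CPS then takes the form $(\widehat{G}, \RR^d \times \widehat{K}, \cL^0)$ with $\widehat{K}$ discrete. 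I would then choose $h = h_1 \otimes h_2 \in K_2(H)$ with $h \equiv 1$ on $W$, taking $h_1 \in \Cc^{\infty}(\RR^d) \cap K_2(\RR^d)$ so that $\widecheck{h_1} \in \cS(\RR^d)$, and $h_2$ a trigonometric polynomial on $K$ equal to $1$ on the projection of $W$ to $K$, so that $\widecheck{h_2}$ has finite (hence compact) support in $\widehat{K}$. Setting $\omega := \omega_h$, Proposition~\ref{P1} yields all hypotheses of Lemma~\ref{L1}, and the formula $\widehat{\omega} = \det(\cL)\,\omega_{\widecheck{h}}$ with $\widecheck{h} = \widecheck{h_1} \otimes \widecheck{h_2}$ puts us exactly in the setup of Lemma~\ref{lemma AP2} applied to the dual CPS, with $\phi := \widecheck{h_1}$ and $\psi := \widecheck{h_2}$. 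The conclusion is that $\omega_{\widecheck{h}}$, and therefore $\widehat{\omega}$, is norm almost periodic, closing the argument.

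The main obstacle I expect is the joint arrangement of the CPS and of $h$ in the construction step. On one hand, we need the internal group to factor as $\RR^d \times K$ with $K$ truly compact (not merely compactly generated), which may force a non-trivial modification of any initially given CPS embedding of $\Lambda$. On the other hand, we must produce a trigonometric polynomial on $K$ that equals $1$ on $\pi_K(W)$; this may in turn require enlarging $W$ slightly to a compact set on which such an exact character sum can be built, for instance by unfolding onto a finite quotient of $K$ and using Fourier inversion there. Once these structural points are in place, the remainder of the proof is a direct chain through Lemmas~\ref{L1} and~\ref{lemma AP2} and Proposition~\ref{conv preserves NAP}.
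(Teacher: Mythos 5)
Your overall strategy is exactly the paper's: build $\omega = \omega_h$ with $h$ a tensor product adapted to the structure theorem for the compactly generated internal space, obtain $\widehat{\gamma} = \widehat{\omega} * \nu$ from Lemma~\ref{L1}, show $\widehat{\omega}$ is norm almost periodic via Lemma~\ref{lemma AP2} applied in the dual CPS, and conclude with Proposition~\ref{conv preserves NAP}. The one step that does not work as written is your preprocessing of the internal space: you cannot ``embed $\ZZ^r$ densely in $\RR^r$'' --- $\ZZ^r$ is a discrete, closed, proper subgroup of $\RR^r$ --- so replacing a $\ZZ^n$-factor of $H$ by $\RR^n$ destroys condition (a) of Definition~\ref{CPS}: the closure of $\pi_H(\cL)$ in the enlarged group is $\RR^d \times \ZZ^n \times \KK$, a proper closed subgroup. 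Hence you cannot in general arrange $H = \RR^d \times K$ with $K$ compact.

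The repair is simply to keep the $\ZZ^n$-factor, as the paper does: take $H \simeq \RR^d \times \ZZ^n \times \KK$, enclose $W$ in $W_0 \times F \times \KK$ with $W_0 \subseteq \RR^d$ compact and $F \subset \ZZ^n$ finite, and choose $h = g \otimes 1_F \otimes 1_{\KK}$ with $g \in \Cc^\infty(\RR^d) \cap K_2(\RR^d)$ equal to $1$ on $W_0$. The $\ZZ^n$-factor causes no harm on the dual side because $\widehat{\ZZ^n} = \TT^n$ is compact, so the dual internal space is $\RR^d \times (\TT^n \times \widehat{\KK})$ and $\widecheck{1_F} \otimes 1_{\{0\}} \in \Cc(\TT^n \times \widehat{\KK})$, which is exactly the setting of Lemma~\ref{lemma AP2} with $H_1 = \TT^n \times \widehat{\KK}$. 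Note also that your worry about producing a trigonometric polynomial equal to $1$ on the projection of $W$ to the compact factor is moot: the constant function $1_{\KK}$ is a character, equals $1$ everywhere, and its inverse transform is supported on the single point $0 \in \widehat{\KK}$, so no enlargement of $W$ or unfolding onto finite quotients is needed. With these two adjustments your argument coincides with the paper's proof.
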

\begin{proof}

Let $\Lambda$ be any Meyer set containing the support of $\gamma$. By \cite[Cor.~5.9.20]{NS11} there exists a CPS $(G, H, \cL)$ with $H$ compactly generated, and some compact set $W \subseteq H$ such that
\begin{displaymath}
\Lambda \subseteq \oplam(W) \,.
\end{displaymath}

Since $H$ is compactly generated, by the structure Theorem we have $H \simeq \RR^d \times \ZZ^n \times \KK$ for some $d,n$ and compact group $\KK$. We can replace then $H$ in the CPS by $\RR^d \times \ZZ^n \times \KK$ and chose the Haar measure on $\ZZ^d$ to be the counting measure, and the Haar measure on $\KK$ to be a probability measure.

Next, since $W$ is compact in $\RR^d \times \ZZ^n \times \KK$, there exists some compact $W_0 \subseteq \RR^d$ and some finite set $F \subset \ZZ^d$ such that
\begin{displaymath}
W \subseteq W_0 \times F \times K \,.
\end{displaymath}

It follows that
\begin{displaymath}
\Lambda \subseteq \oplam( W_0 \times F \times \KK ) \,.
\end{displaymath}

Now, pick some $g \in \Cc^\infty(\RR^d) \cap \cK_2(\RR^d)$ such that $g(x) =1 \forall x \in W_0$.

Let $h : \ZZ^d \times \KK$ be $h=1_F \otimes 1_{\KK}$. It is obvious that $h \in \Cc( \ZZ^d \times \KK)$ and that
\begin{displaymath}
h= h * ( 1_{0} \otimes 1_{\KK})
\end{displaymath}
which yields $h \in  \cK_2( \ZZ^d \times \KK)$.

Then, by \cite{CRS} the measure $\omega=\omega_{g \otimes h}$ is Fourier transformable and
\begin{displaymath}
\widehat{\omega}=\reallywidehat{\omega_{g \otimes h}}=\dens(\cL) \, \omega_{\widecheck{g \otimes h}} \,.
\end{displaymath}

Moreover, by construction
\begin{displaymath}
\omega(\{x \})=1 \, \forall x \in \Lambda \,.
\end{displaymath}

Then, by Lemma~\ref{L1} there exists a finite measure $\nu$ such that
\begin{displaymath}
\widehat{\gamma}= \widehat{\omega} * \nu \,.
\end{displaymath}

Let $\phi=\widecheck{g}$ and $\psi=\widecheck{h}$. Then, as $g \in \Cc^\infty(\RR^d)$ we have $\phi \in \cS(\RR^d)$. Moreover, recalling that $\widehat{\ZZ^d}$ is a compact group and that $\widehat{\KK}$ is discrete, we have
\begin{displaymath}
\psi=\widecheck{h}=\widecheck{1_F \otimes 1_{\KK}}=\widecheck{1_F} \otimes 1_{0}  \in \Cc(\widehat{\ZZ^d} \times \widehat{\KK}) \,.
\end{displaymath}

Therefore, by Lemma~\ref{lemma AP2}, the measure $ \widehat{\omega} = \omega_{ \phi \otimes \psi}$ is norm almost periodic.

As $\widehat{\omega}$ is norm almost periodic, and $\nu$ is finite, it follows now from Proposition ~\ref{conv preserves NAP} that $\widehat{\gamma}$ is norm almost periodic.

\end{proof}

As an immediate consequence we get:

\begin{corollary}\label{cor1} Let $\gamma$ be a Fourier transformable measure supported inside a Meyer set. Then each of $\left( \widehat{\gamma} \right)_{pp}, \left( \widehat{\gamma} \right)_{ac}$ and $\left( \widehat{\gamma} \right)_{sc}$ is norm almost periodic.
\end{corollary}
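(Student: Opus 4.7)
The plan is to combine Theorem~\ref{T2} (existence of the generalized Eberlein decomposition) with Theorem~\ref{diff is NAP} (norm almost periodicity of the Fourier transform of measures with Meyer set support), and essentially no additional work is required. The whole point of having set up the generalized Eberlein decomposition in Theorem~\ref{T2} was precisely so that pointwise spectral statements of this kind become immediate consequences of the corresponding statement about $\widehat{\gamma}$ itself.

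First, I would invoke Theorem~\ref{T2}: since $\gamma$ is Fourier transformable and $\supp(\gamma)$ is contained in a Meyer set $\Lambda$, there exist Fourier transformable measures $\gamma_s, \gamma_{0s}, \gamma_{0a}$, each supported inside a Meyer set (in fact inside any fixed $\oplam(W) \supseteq \Lambda$), such that
\begin{displaymath}
\reallywidehat{\gamma_s} = (\widehat{\gamma})_{pp}, \quad \reallywidehat{\gamma_{0s}} = (\widehat{\gamma})_{sc}, \quad \reallywidehat{\gamma_{0a}} = (\widehat{\gamma})_{ac}.
\end{displaymath}

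Next, I would apply Theorem~\ref{diff is NAP} to each of $\gamma_s, \gamma_{0s}, \gamma_{0a}$ individually. Each of these three measures is Fourier transformable and has Meyer set support (both properties are guaranteed by Theorem~\ref{T2}), so Theorem~\ref{diff is NAP} immediately yields that $\reallywidehat{\gamma_s}$, $\reallywidehat{\gamma_{0s}}$, and $\reallywidehat{\gamma_{0a}}$ are norm almost periodic. Substituting the identifications from the previous paragraph gives the conclusion.

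There is no real obstacle here; the corollary is a formal consequence of the two main theorems already established. The only care needed is to emphasize that Theorem~\ref{diff is NAP} applies separately to each component of the generalized Eberlein decomposition—this is exactly the strength of having produced $\gamma_s, \gamma_{0s}, \gamma_{0a}$ as Fourier transformable measures with Meyer set support, rather than merely noting that $\widehat{\gamma}$ admits a Lebesgue decomposition in the spectral space.
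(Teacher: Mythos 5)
Your argument is exactly the paper's own proof: invoke Theorem~\ref{T2} to obtain the generalized Eberlein decomposition into Fourier transformable measures with Meyer set support, then apply Theorem~\ref{diff is NAP} to each component. The proposal is correct and requires no changes.
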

\begin{proof} We know by Theorem~\ref{T2} that the generalized Eberlein decomposition
\begin{displaymath}
\gamma=\gamma_{s}+\gamma_{0a}+\gamma_{0s}
\end{displaymath}
exists within the class of Fourier transformable measures with Meyer set support. The claim now follows by applying Theorem~\ref{diff is NAP} to each component.
\end{proof}

\begin{remark} It is also easy to prove that, if a measure $\mu$ is norm almost periodic, then so is each of $\mu_{pp}, \mu_{ac}, \mu_{sc}$. This gives an alternate way to deduce Corollary~\ref{cor1} from Theorem~\ref{diff is NAP}.
\end{remark}

\section{Fourier transformable measures with Meyer set support and positive definiteness}

It is well known that any linear combination of positive definite measures is Fourier transformable. An open problem in Fourier analysis, which was first mentioned implicitly in \cite[Page~29]{ARMA1}, is if the converse of this also holds:

\begin{conjecture}\label{Conj1} A measure $\mu$ on $G$ is Fourier transformable if and only if it is a linear combination of positive definite measures.
\end{conjecture}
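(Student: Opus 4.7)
The forward direction is immediate: Theorem~\ref{pd is FT} gives that each positive definite measure is Fourier transformable, and the class of Fourier transformable measures is closed under complex linear combinations, so any linear combination of positive definite measures is Fourier transformable.

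For the converse, the plan is to lift the Meyer-set strategy of Theorem~\ref{T4} to the general setting. Given $\mu$ Fourier transformable, one would seek a \emph{reference} positive definite measure $\omega$ on $G$ together with a finite complex measure $\nu$ on $\widehat{G}$ satisfying $\widehat{\mu} = \widehat{\omega} * \nu$. Granted such a factorization, decompose $\nu = \sum_{j=1}^{4} c_j \nu_j$ into four positive finite measures via the Jordan decomposition applied to the real and imaginary parts of $\nu$. Since $\omega$ is positive definite, $\widehat{\omega}\geq 0$, so each $\widehat{\omega} * \nu_j$ is a positive translation bounded measure, and arguing as in Lemma~\ref{L1}(ii) it is the Fourier transform of a measure. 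By the converse of Theorem~\ref{pd is FT} (see \cite{BF,MoSt}), each of the inverse Fourier transforms is a positive definite measure $\mu_j$, and $\mu = \sum_{j} c_j \mu_j$ is the desired linear combination.

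The main obstacle, which is exactly why this is a conjecture rather than a theorem, is the construction of the reference $\omega$ in the first step. In the Meyer case treated by Theorem~\ref{T4}, Proposition~\ref{P1} exploits the cut-and-project structure to build a weighted Dirac comb $\omega_h$ whose Fourier transform is pure point, and the Ping-Pong Lemma~\ref{L1} then produces the requisite finite $\nu$ automatically from the uniform discreteness of $\supp(\omega)$. For a general Fourier transformable $\mu$ there is no discrete support to exploit; $\widehat{\mu}$ could be purely continuous, with no natural ``backbone'' against which to factor. Moreover, the naive Jordan-type decomposition of $\widehat{\mu}$ alone fails, because a translation bounded positive measure need not be positive definite (it is easy to exhibit a positive bounded sequence on $\mathbb{Z}$ whose associated Dirac comb has a negative Fourier density via Herglotz), so the positive components of $\widehat{\mu}$ need not lie in the image of the Fourier transform.

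A natural attempt at a workaround is approximation: write $\mu$ as a vague limit of Fourier transformable measures $\mu_\alpha$ supported inside Meyer sets $\Lambda_\alpha$ (for instance by multiplying $\widehat{\mu}$ against nice cutoffs and taking inverse transforms), apply Theorem~\ref{T4} to each $\mu_\alpha$, and attempt to pass to the limit in the resulting decompositions. The obstruction is that the linear span of positive definite measures is not evidently sequentially closed in any natural topology on $\cM^\infty(G)$: while the cone of positive definite measures is vaguely closed, differences can escape the span in the limit, and no uniform bound on the number of positive definite summands is available from Theorem~\ref{T4}. Overcoming this would likely require either a quantitative version of Theorem~\ref{T4} (bounding the ``positive-definite complexity'' of $\mu_\alpha$ in terms of invariants that survive the limit), or a Hahn--Banach separation argument exhibiting, for any supposed counterexample $\mu$, a continuous linear functional that annihilates every positive definite measure but not $\mu$ itself; either route would be a substantial new input beyond what is developed in this paper.
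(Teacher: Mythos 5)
The statement you were asked to prove is Conjecture~\ref{Conj1}, which the paper explicitly presents as an \emph{open problem} (attributed to Argabright and de~Lamadrid); the paper contains no proof of it and only establishes the special case of measures supported inside lattices (Proposition~\ref{Prop1}, citing \cite{CRS3}) and inside Meyer sets (Theorem~\ref{T4}). So there is no proof in the paper against which to compare your attempt, and your proposal is correctly calibrated to this fact: you prove the easy direction (linear combinations of positive definite measures are Fourier transformable, by Theorem~\ref{pd is FT} and linearity of the transform), and you explicitly decline to claim the converse, instead explaining why it is open. That diagnosis matches the paper's own framing exactly: the Meyer-set proof hinges on Proposition~\ref{P1} and the Ping-Pong Lemma~\ref{L1}, which use the uniform discreteness of $\supp(\omega)$ and the cut-and-project structure to obtain the factorization $\widehat{\gamma}=\widehat{\omega}*\nu$ with $\nu$ finite; no analogue of the reference measure $\omega$ is available for a general Fourier transformable $\mu$. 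Your further observation that one cannot simply Jordan-decompose $\widehat{\mu}$ and pull the pieces back, because a positive translation bounded measure need not be positive definite and hence need not lie in the range of the Fourier transform, correctly identifies where the naive argument fails.

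The only caution is one of presentation: your text is a proof of one implication plus a discussion of obstructions, not a proof of the conjecture, and it should not be labelled as a proof. If the intended target was actually the paper's positive result in this direction, that is Theorem~\ref{T4}, whose proof runs through Lemma~\ref{L3} (polarisation identity applied to $\omega_h$ in a CPS) and Lemma~\ref{L4} (combining the Ping-Pong Lemma with the decomposition $\nu=\nu_1-\nu_2+i\nu_3-i\nu_4$ into finite positive measures and the products $f_j\omega_k$); your sketch of the converse is essentially a correct outline of that argument, restricted to the Meyer case where the reference measure exists.
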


So far, the only progress towards solving Conjecture~\ref{Conj1} we are aware of is the following result.

\begin{proposition}\label{Prop1} \cite{CRS3} Let $\mu$ be a measure supported inside some lattice $L$. Then $\mu$ is Fourier transformable if and only if $\mu$ is a linear combination of positive definite measures supported inside $L$.
\end{proposition}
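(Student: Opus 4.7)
The forward implication is immediate from Theorem~\ref{pd is FT} and linearity, so the task is the converse. The plan is to run the Ping-Pong Lemma (Lemma~\ref{L1}) with the simplest possible choice $\omega = \delta_L$. Since $L$ is uniformly discrete, $\delta_L(\{x\}) = 1$ for every $x \in L$, and $\delta_L$ is twice Fourier transformable with $\widehat{\delta_L} = \dens(\cL)\,\delta_{L^0}$, all the hypotheses of Lemma~\ref{L1} are met. Part (i) of that lemma then produces a finite measure $\nu$ on $\widehat{G}$ with
\[
\widehat{\mu} \;=\; \dens(\cL)\, \delta_{L^0} * \nu.
\]

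Next I would split the complex finite measure $\nu$ via the standard Jordan-type decomposition of its real and imaginary parts, writing $\nu = \nu_1 - \nu_2 + i\nu_3 - i\nu_4$ with each $\nu_j$ a finite positive measure. Each measure $\sigma_j := \dens(\cL)\,\delta_{L^0} * \nu_j$ is then positive, translation bounded (convolution of a translation bounded measure with a finite measure is translation bounded), and $L^0$-periodic, and one has $\widehat{\mu} = \sigma_1 - \sigma_2 + i\sigma_3 - i\sigma_4$.

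The core step is to realise each $\sigma_j$ as the Fourier transform of a positive definite measure supported in $L$. The lattice characterisation recorded in the remark following Lemma~\ref{L1} asserts that every measure of the form $\delta_{L^0} * \nu_j$ with $\nu_j$ finite arises as $\widehat{\gamma_j}$ for some Fourier transformable $\gamma_j$ supported in $L$; equivalently, Lemma~\ref{L1}(ii) produces a candidate $\gamma_j := \bigl(\reallywidehat{\sigma_j}\bigr)^{\dagger}$ supported in $L$, and the additional input needed is that $\gamma_j$ itself is Fourier transformable so that $\widehat{\gamma_j} = \sigma_j$. Once this is in hand, the converse direction of Theorem~\ref{pd is FT} (see \cite{BF,MoSt}) applied to $\gamma_j$, whose Fourier transform $\sigma_j$ is positive, forces $\gamma_j$ to be positive definite. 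Injectivity of the Fourier transform on the class of Fourier transformable measures then yields
\[
\mu \;=\; \gamma_1 - \gamma_2 + i\gamma_3 - i\gamma_4,
\]
which is the desired representation of $\mu$ as a linear combination of positive definite measures supported inside $L$.

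The main obstacle is precisely the Fourier transformability of the candidate measures $\gamma_j$: a priori they are only defined as reflected inverse Fourier transforms of $\sigma_j$, and to apply the positive definiteness criterion one must return to the $\mu$-side. In the lattice setting this is classical, since any positive $L^0$-periodic translation bounded measure is of the form $\delta_{L^0} * \rho$ for a finite positive $\rho$ carried by a fundamental domain, and such a measure is the Fourier transform of the corresponding weighted Dirac comb on $L$ by Poisson summation for $\cL$. This same structural obstruction, and its resolution via embedding into a model set and carefully choosing $\omega$ with $\omega(\{x\}) = 1$ on $\Lambda$ (Proposition~\ref{P1}), is exactly what will drive the Meyer set analogue Theorem~\ref{T4}.
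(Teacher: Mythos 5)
You should first note that the paper itself offers no proof of this proposition: it is imported verbatim from \cite{CRS3} (a paper in preparation), so there is no internal argument to compare against. That said, your strategy is essentially the lattice specialisation of the machinery the paper builds later for Theorem~\ref{T4}: take $\omega=\delta_L$ (twice Fourier transformable, $\omega(\{x\})=1$ on $L$, uniformly discrete support), invoke the Ping-Pong Lemma~\ref{L1}(i) to get $\widehat{\mu}=\widehat{\delta_L}*\nu$ with $\nu$ finite, split $\nu$ into four finite positive pieces, and pull each piece back. This is exactly the skeleton of Lemmas~\ref{L3} and \ref{L4}, where the role of $\delta_L$ is played by the measures $\omega_{(g\pm h)*\widetilde{(g\pm h)}}$ and $\omega_{(g\pm ih)*\widetilde{(g\pm ih)}}$ coming from the polarisation identity; in the lattice case no such construction is needed because $\delta_L$ is already positive definite, which is what makes your choice legitimate and simpler.

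The one soft spot is the step you yourself flag: the Fourier transformability of the candidates $\gamma_j=\bigl(\reallywidehat{\sigma_j}\bigr)^{\dagger}$, which you dispose of by appealing to ``classical'' periodisation facts for lattices --- but those facts are precisely the content of \cite{CRS3}, so as written the argument is circular at that point. The gap closes cleanly if you reverse the direction of the construction, as the paper does in Lemma~\ref{L4}: set $f_j:=\widecheck{\nu_j}$, a continuous positive definite function by Bochner, and define $\gamma_j:=f_j\,\delta_L$ directly. Since the product of a continuous positive definite function with a positive definite measure is positive definite (\cite[Cor.~4.3]{ARMA1}), each $\gamma_j$ is positive definite, hence Fourier transformable by Theorem~\ref{pd is FT}, and is manifestly supported inside $L$; the convolution theorem together with double Fourier transformability of $\delta_L$ then identifies $\widehat{\gamma_j}=\widehat{\delta_L}*\nu_j=\sigma_j$, and injectivity of the Fourier transform gives $\mu=\gamma_1-\gamma_2+i\gamma_3-i\gamma_4$. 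With that substitution your proof is complete and self-contained.
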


In this section we extend Proposition~\ref{Prop1} to measures supported inside Meyer sets.

\smallskip

We start by proving the following preliminary lemma.

\begin{lemma}\label{L3} Let $(G,H,\cL)$ be a CPS and $W \subseteq H$ be compact set. Then, there exists a compact set $K \subseteq H$, and four positive definite measures $\omega_1,\omega_2,\omega_3, \omega_4$ supported inside $\oplam(K)$ such that, for all $x \in W$ we have
\begin{displaymath}
\omega_1(\{ x\})-\omega_2(\{x \})+i \omega_3(\{ x\}) -i \omega_4(\{ x \}) =1 \,.
\end{displaymath}

Moreover, each of $\omega_1,\omega_2,\omega_3$ and $\omega_4$, respectively, is twice Fourier transformable.
\end{lemma}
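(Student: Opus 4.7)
The approach is to apply the polarization identity to a function $h \in K_2(H)$ produced by Lemma~\ref{L2}, and then transport the resulting four-term decomposition across the map $h \mapsto \omega_h$ studied in Proposition~\ref{P1}.

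First, by Lemma~\ref{L2}, choose $h \in K_2(H)$ with $h \equiv 1$ on $W$. The proof of Lemma~\ref{L2} actually presents $h$ as $h = f * g$ for explicit $f, g \in \Cc(H)$; setting $v := \widetilde{g} \in \Cc(H)$, we may rewrite $h = f * \widetilde{v}$. The polarization identity for the sesquilinear pairing $(a,b) \mapsto a * \widetilde{b}$ then gives
\[
4h \;=\; u_1 * \widetilde{u_1} \,-\, u_2 * \widetilde{u_2} \,+\, i\,u_3 * \widetilde{u_3} \,-\, i\,u_4 * \widetilde{u_4},
\]
with $u_j \in \Cc(H)$ being the four standard combinations $f \pm v,\, f \pm iv$. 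Set $h_j := \tfrac{1}{4}\, u_j * \widetilde{u_j}$. Each $h_j$ is continuous, compactly supported, lies in $K_2(H)$, and is positive definite as a function on $H$ since its Fourier transform is $\tfrac{1}{4}|\widecheck{u_j}|^2 \geq 0$.

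Second, define $\omega_j := \omega_{h_j}$ and let $K \subseteq H$ be any compact set containing $\bigcup_{j=1}^{4} \supp(h_j)$; then $\supp(\omega_j) \subseteq \oplam(K)$. Twice Fourier transformability of each $\omega_j$ follows from $h_j \in K_2(H)$ via the same result of \cite{CRS2} already invoked in Proposition~\ref{P1}, which also yields $\widehat{\omega_j} = \det(\cL)\,\omega_{\widecheck{h_j}}$. For positive definiteness, note that since $h_j$ is positive definite and in $L^1(H) \cap \Cc(H)$, Bochner's theorem gives $\widecheck{h_j} \geq 0$ on $\widehat{H}$; hence $\widehat{\omega_j}$ is a non-negative measure on $\widehat{G}$, and the converse of Theorem~\ref{pd is FT} (noted right after its statement in the excerpt) delivers that $\omega_j$ is positive definite.

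Finally, for any $x \in L$ with $x^\star \in W$ (i.e.\ any $x \in \oplam(W)$), one has $\omega_j(\{x\}) = h_j(x^\star)$, whence
\[
\omega_1(\{x\}) - \omega_2(\{x\}) + i\,\omega_3(\{x\}) - i\,\omega_4(\{x\}) \;=\; h_1(x^\star) - h_2(x^\star) + i\,h_3(x^\star) - i\,h_4(x^\star) \;=\; h(x^\star) \;=\; 1.
\]
I expect the only non-routine point to be the passage from positive definiteness of the functions $h_j$ on $H$ to positive definiteness of the measures $\omega_{h_j}$ on $G$; this is handled cleanly by combining the Fourier formula $\widehat{\omega_{h_j}} = \det(\cL)\,\omega_{\widecheck{h_j}}$, Bochner's theorem, and the converse of Theorem~\ref{pd is FT}, and requires no new machinery beyond what the excerpt has already introduced.
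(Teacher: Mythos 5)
Your proof is correct and follows essentially the same route as the paper: both apply the polarisation identity to a convolution $a*\widetilde{b}$ that equals $1$ on $W$ (your $f*\widetilde{v}$ is exactly the $g*\widetilde{h}$ the paper builds via Urysohn's Lemma plus a normalised bump, i.e.\ the construction inside Lemma~\ref{L2}), and then lift the four resulting terms to the measures $\omega_{(\cdot)}$ supported in $\oplam(K)$. The only difference is in the justification of positive definiteness of the $\omega_j$: where the paper cites \cite[Prop.~5.8.10]{NS11} directly for $\omega_{u*\widetilde{u}}$, you rederive it from $\widehat{\omega_{h_j}}=\det(\cL)\,\omega_{\widecheck{h_j}}\geq 0$ together with the converse of Theorem~\ref{pd is FT}, which is an equally valid and self-contained argument given the facts already quoted in the paper.
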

\begin{proof}

Fix an arbitrary pre-compact open set $0 \in V=-V$. Since $W+\overline{V}$ is compact, there exists some $g \in \Cc(G)$ such that $g \geq 1_{W+\overline{V}}$. Pick some $h \in \Cc(G)$ non negative such that
$\supp(h) \subseteq V$ and $\int_H h(t) \dd t =1$.

Since $V =-V$, a simple computation shows that
\begin{equation}\label{EQ5}
g*\tilde{h}(t)=1 \qquad \forall t \in W \,.
\end{equation}
Next by the polarisation identity \cite[Page~244]{MoSt}, we have
\begin{align*}
   g * \widetilde{h} &=  \frac{1}{4}
 \Bigl(\bigl[ (g+h)*\widetilde{(g+h)}
-(g-h)*\widetilde{(g-h)}\bigr] \\
&+i \bigl[ (g+i h)*\widetilde{(g+i h)} -
   (g-i h)*\widetilde{(g-i h)}\bigr] \Bigr) \,.
 \end{align*}
Since $g,h \in \Cc(G)$, we can find some compact set $K$ which contains the support of each of $ (g+h)*\widetilde{(g+h)}; (g-h)*\widetilde{(g-h)} ;(g+i h)*\widetilde{(g+i h)}$ and $(g-i h)*\widetilde{(g-i h)}$, respectively.

Define
\begin{align*}
  \omega_1 &:=\frac{1}{4} \omega_{ (g+h)*\widetilde{(g+h)}}  \\
  \omega_2 &:=\frac{1}{4} \omega_{ (g-h)*\widetilde{(g-h)}}  \\
  \omega_3 &:=\frac{1}{4} \omega_{ (g+ih)*\widetilde{(g+ih)}}  \\
  \omega_4 &:=\frac{1}{4} \omega_{ (g-ih)*\widetilde{(g-ih)}}  \,.
\end{align*}
Then, for each $1 \leq j \leq 4$ the measure $\omega_j$ is positive definite \cite[Prop.~5.8.10]{NS11}, and $\supp(\omega_j) \subseteq \oplam(K)$.

Next, by construction we have
\begin{displaymath}
\omega_1-\omega_2+i \omega_3-i \omega_4= \omega_{g*\tilde{h}} \,.
\end{displaymath}

Therefore, by Eq.~(\ref{EQ5}) we have
\begin{displaymath}
\omega_1(\{ x\})-\omega_2(\{x \})+i \omega_3(\{ x\}) -i \omega_4(\{ x \}) =1  \qquad \forall x \in W \,.
\end{displaymath}

The last claim follows from \cite[Thm.~4.12]{CRS}.
\end{proof}

\smallskip
As a consequence we get:

\begin{lemma}\label{L4} Let $(G,H,\cL)$ be a CPS and $W \subseteq H$ be compact set and let $K$ be as in Lemma~\ref{L3}. Then, for each Fourier transformable measure $\gamma$ with $\supp(\gamma) \subseteq \oplam(W)$ there exists positive definite measures $\gamma_1, \gamma_2, \gamma_3, \gamma_4$ supported inside $\oplam(K)$ such that
\begin{displaymath}
\gamma=\gamma_1-\gamma_2+i\gamma_3 -i \gamma_4 \,.
\end{displaymath}
\end{lemma}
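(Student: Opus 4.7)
The plan is to apply the Ping-Pong Lemma (Lemma~\ref{L1}) to the composite measure $\omega := \omega_1 - \omega_2 + i \omega_3 - i\omega_4$ delivered by Lemma~\ref{L3}, invert through the Fourier transform after splitting the resulting finite measure into four positive pieces, and regroup terms using the fact that the product of a positive definite function in $\Cu(G)$ with a positive definite measure is positive definite.

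First I would set $\omega := \omega_1 - \omega_2 + i \omega_3 - i\omega_4$. By Lemma~\ref{L3}, this $\omega$ is twice Fourier transformable with $\supp(\omega) \subseteq \oplam(K)$, which is uniformly discrete in some neighbourhood $0 \in U$, and it satisfies $\omega(\{x\}) = 1$ for all $x \in \oplam(W)$, hence in particular for all $x \in \supp(\gamma)$. These are exactly the ingredients the ping-pong computation relies on (the argument in Lemma~\ref{L1} only invokes $U$-uniform discreteness of $\supp(\omega)$ together with the normalisation $\omega(\{x\})=1$ on $\supp(\gamma)$). Lemma~\ref{L1}(i) then produces a finite measure $\nu$ on $\widehat{G}$ with $\widehat{\gamma} = \widehat{\omega} * \nu$.

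Next I would apply the Jordan decomposition to the real and imaginary parts of $\nu$ to write $\nu = \nu_1 - \nu_2 + i\nu_3 - i\nu_4$ with each $\nu_j$ a finite positive measure. Each $\widehat{\omega} * \nu_j$ is well defined and Fourier transformable, and the inversion formula established in the proof of Lemma~\ref{L1}(ii) yields $\bigl(\reallywidehat{\widehat{\omega}*\nu_j}\bigr)^{\dagger} = (\widehat{\nu_j})^\dagger\, \omega$. Summing gives
\[
\gamma \;=\; \bigl(\widehat{\nu}\bigr)^{\dagger}\omega \;=\; \sum_{j=1}^{4} c_j\, f_j\, \omega, \qquad c_j \in \{1,-1,i,-i\}, \quad f_j := (\widehat{\nu_j})^\dagger .
\]
Each $\widehat{\nu_j}$ is a continuous positive definite function on $G$ because $\nu_j$ is a finite positive measure, and positive definiteness is preserved under $f \mapsto f^\dagger = \overline{f}$, so each $f_j \in \Cu(G)$ is positive definite.

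Finally, I would expand $\omega = \omega_1 - \omega_2 + i\omega_3 - i\omega_4$ inside each summand above and regroup the sixteen resulting terms $c_j d_k\, f_j\, \omega_k$ (with $d_k \in \{1,-1,i,-i\}$) according to whether the scalar $c_j d_k$ equals $1$, $-1$, $i$, or $-i$. Each product $f_j\,\omega_k$ is the pointwise product of a positive definite function in $\Cu(G)$ with a positive definite measure, and is therefore positive definite; moreover it is supported inside $\supp(\omega_k) \subseteq \oplam(K)$. Sums of positive definite measures supported in $\oplam(K)$ remain positive definite and supported in $\oplam(K)$, so the four grouped sums are the desired positive definite measures $\gamma_1, \gamma_2, \gamma_3, \gamma_4 \subseteq \oplam(K)$ with $\gamma = \gamma_1 - \gamma_2 + i\gamma_3 - i\gamma_4$. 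I do not anticipate a genuine obstacle here: this is essentially the polarisation identity married to the ping-pong correspondence, and the only points needing slight care are the verification that $f_j\,\omega_k$ is positive definite (a standard fact about the product of a positive definite function with a positive definite measure) and the bookkeeping that Fourier transformability propagates through each step, which is immediate from the finiteness of $\nu_j$ and the twice-transformability of $\omega$.
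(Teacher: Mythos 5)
Your proposal is correct and follows essentially the same route as the paper: build $\omega=\omega_1-\omega_2+i\omega_3-i\omega_4$ from Lemma~\ref{L3}, invoke the Ping-Pong Lemma to get $\widehat{\gamma}=\widehat{\omega}*\nu$, split $\nu$ into four finite positive measures, and regroup the sixteen products $f_j\,\omega_k$ (each positive definite and supported inside $\oplam(K)$) according to the scalar $c_jd_k$. The only cosmetic difference is that you perform the inversion on the direct side via $\gamma=(\widehat{\nu})^\dagger\omega$, whereas the paper verifies $\widehat{f_j\omega_k}=\widehat{\omega_k}*\nu_j$ on the Fourier side and concludes by injectivity of the Fourier transform.
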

\begin{proof}

Let $\omega_1, \omega_2, \omega_3, \omega_4$ be as in Lemma~\ref{L3}, and let
\begin{displaymath}
\omega:= \omega_1-\omega_2+i \omega_3-i \omega_4 \,.
\end{displaymath}

Then, by Lemma~\ref{L3}, $\omega$ is twice Fourier transformable, supported inside $\oplam(K)$ and satisfies
\begin{displaymath}
\omega( \{ x \}) =1 \qquad \forall x \in \oplam(W) \,.
\end{displaymath}

Therefore, by the Ping-Pong Lemma for Meyer sets (Lemma~\ref{L1} (i)), there exists a finite measure $\nu$ such that
\begin{displaymath}
\widehat{\gamma}=\widehat{\omega} * \nu \,.
\end{displaymath}

Next, write
\begin{displaymath}
\nu=\nu_1-\nu_2+i\nu_3-i \nu_4 \,,
\end{displaymath}
as a linear combination of finite positive measures. For each $1 \leq j \leq 4$ define
\begin{displaymath}
f_j := \widecheck{\nu_j} \,.
\end{displaymath}
Then, each $f_j \in \Cu(G)$ is positive definite \cite[Thm.~4.8.4]{MoSt}. Therefore, by \cite[Cor.~4.3]{ARMA1}, for all $1 \leq j,k \leq k$ the measure $f_j \omega_k$ is positive definite. 
We also have 
\begin{equation}\label{EQ6}
\supp( f_j \omega_k ) \subseteq \supp(\omega_k) \subseteq \oplam(K) \,.
\end{equation}

Next, let $1 \leq j, k \leq 4$ be arbitrary. Since $\omega_k$ is Fourier transformable, $\widehat{\omega_k}$ is translation bounded \cite[Thm.~4.9.23]{MoSt}. As $\nu_j$ is finite, the measures $\widehat{\omega_k}$ and $\nu_j$ are convolvable \cite[Lemma~4.9.19]{MoSt}. Therefore, by the twice Fourier transformability of $\omega_k$ and \cite[Lemma~4.9.26]{MoSt}, the measure $\widehat{\omega_k}*\nu_j$ is Fourier transformable and
\begin{displaymath}
\reallywidehat{\widehat{\omega_k}*\nu_j}= \reallywidehat{\widehat{\omega_k}} \widehat{\nu_j}= (f_j \omega_k)^\dagger \,.
\end{displaymath}
Since this is positive definite, it is Fourier transformable \cite[Thm.~4.11.5]{MoSt}. Therefore
\begin{displaymath}
\widehat{f_j \omega_k}= \left(\reallywidehat{(f_j \omega_k)^\dagger}\right)^\dagger = \left(\reallywidehat{\reallywidehat{\widehat{\omega_k}*\nu_j}}\right)^\dagger= \left((\widehat{\omega_k}*\nu_j)^\dagger\right)^\dagger=\widehat{\omega_k}*\nu_j \,.
\end{displaymath}

This shows that, for each $1 \leq j,k \leq 4$ the measure $f_j \omega_k$ is positive definite and
\begin{displaymath}
\widehat{f_j \omega_k}= \widehat{\omega_k}*\nu_j \,.
\end{displaymath}

Now the claim follows. Indeed, we have
\begin{align*}
\widehat{\gamma}&=\widehat{\omega} * \nu = \left( \widehat{\omega_1}-\widehat{\omega_2}+i \widehat{\omega_3}-i \widehat{\omega_4} \right)* \left( \nu_1-\nu_2+i\nu_3-i \nu_4 \right) \\
&= \left( \widehat{\omega_1}*\nu_1 +\widehat{\omega_2}*\nu_2+\widehat{\omega_3}*\nu_4+\widehat{\omega_4}*\nu_3\right)- \left( \widehat{\omega_1}*\nu_2+\widehat{\omega_2}*\nu_1+\widehat{\omega_3}*\nu_3+\widehat{\omega_4}*\nu_4 \right)\\
&+i\left( \widehat{\omega_1}*\nu_3 +\widehat{\omega_2}*\nu_4+\widehat{\omega_3}*\nu_1+\widehat{\omega_4}*\nu_2\right)- i\left( \widehat{\omega_1}*\nu_4+\widehat{\omega_2}*\nu_3+\widehat{\omega_3}*\nu_2+\widehat{\omega_4}*\nu_1\right)\\
&=\reallywidehat{\left(f_1\omega_1+f_2\omega_2+f_3\omega_4+f_4\omega_3 \right)} - \reallywidehat{\left(f_2\omega_1+f_1\omega_2+f_3\omega_3+f_4\omega_4 \right)} \\
&+i\reallywidehat{\left(f_3\omega_1+f_4\omega_2+f_1\omega_3+f_2\omega_4 \right)}-i\reallywidehat{\left(f_4\omega_1+f_3\omega_2+f_2\omega_3+f_1\omega_4 \right)} \,.
\end{align*}

Define
\begin{align*}
  \gamma_1 &:= f_1\omega_1+f_2\omega_2+f_3\omega_4+f_4\omega_3 \\
  \gamma_2 &:=f_2\omega_1+f_1\omega_2+f_3\omega_3+f_4\omega_4  \\
  \gamma_3 &:= f_3\omega_1+f_4\omega_2+f_1\omega_3+f_2\omega_4 \\
  \gamma_4 &:=f_4\omega_1+f_3\omega_2+f_2\omega_3+f_1\omega_4
\end{align*}
Then, for each $1 \leq j \leq 4$ the measure $\gamma_j$ is a sum of positive definite measures and hence positive definite. Moreover, by Eq. (\ref{EQ6}), $\supp(\gamma_j) \subseteq \oplam(K)$.

Finally, as a linear combination of positive definite measures, the measure $\gamma_1-\gamma_2+i\gamma_3 -i \gamma_4$ is Fourier transformable and
\begin{displaymath}
\reallywidehat{\gamma_1-\gamma_2+i\gamma_3 -i \gamma_4}=\widehat{\gamma} \,.
\end{displaymath}

Since the Fourier transform for measures is one to one \cite[Thm.~4.9.13]{MoSt}, we get
\begin{displaymath}
\gamma=\gamma_1-\gamma_2+i\gamma_3 -i \gamma_4 \,.
\end{displaymath}

This completes the proof.
\end{proof}

\smallskip

As a consequence we get the main result in this section.

\begin{theorem}\label{T4} Let $\mu$ be a measure supported inside a Meyer set $\Lambda$. Then $\mu$ is Fourier transformable if and only if it is a linear combination of positive definite measures.

Moreover, in this case, the positive definite measures can be chosen to be supported inside a Meyer set $\Gamma$.
\end{theorem}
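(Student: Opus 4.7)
The plan is to derive Theorem~\ref{T4} as an immediate corollary of Lemma~\ref{L4}, which has already done the substantive work. The ``if'' direction is Meyer-independent and standard: by Theorem~\ref{pd is FT} every positive definite measure is Fourier transformable, and since Fourier transformability is preserved under finite linear combinations, any combination of positive definite measures is Fourier transformable.

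For the ``only if'' direction, I would argue as follows. Suppose $\mu$ is Fourier transformable with $\supp(\mu) \subseteq \Lambda$. Since $\Lambda$ is a Meyer set, unwinding the definition produces a CPS $(G,H,\cL)$ and a compact set $W \subseteq H$ with $\Lambda \subseteq \oplam(W)$, so in particular $\supp(\mu) \subseteq \oplam(W)$. I then apply Lemma~\ref{L4} with $\gamma := \mu$: this produces a compact set $K \subseteq H$ and positive definite measures $\mu_1, \mu_2, \mu_3, \mu_4$ all supported inside $\oplam(K)$ such that
\[
\mu = \mu_1 - \mu_2 + i\mu_3 - i\mu_4.
\]
Inspection of the construction in Lemma~\ref{L3} shows that $K$ depends only on $W$ (via the auxiliary functions $g,h \in \Cc(H)$ and the precompact symmetric open neighbourhood of $0$ chosen there), not on $\mu$. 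Setting $\Gamma := \oplam(K)$ then gives a Meyer set (a model set with compact window) into which all four positive definite components are embedded, and $\Gamma$ depends only on the choice of CPS and window for $\Lambda$, which confirms the ``moreover'' clause.

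The hardest step has already been cleared: Lemma~\ref{L3} used the polarisation identity on a carefully chosen pair $g,h \in \Cc(H)$ to split the weighting function $g * \widetilde{h}$, which equals $1$ on $W$, into four positive definite pieces $\omega_1,\omega_2,\omega_3,\omega_4$ with a common model set support $\oplam(K)$; and Lemma~\ref{L4} then invoked the Ping--Pong Lemma~\ref{L1} to convolve this auxiliary decomposition on the Fourier side with a decomposition $\nu = \nu_1 - \nu_2 + i\nu_3 - i\nu_4$ of the finite measure $\nu$ arising there, and transported everything back to the measure side via twice Fourier transformability together with the identity $\widehat{f_j \omega_k} = \widehat{\omega_k} * \nu_j$. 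Given those two lemmas, the present theorem amounts to a bookkeeping step: the only remaining content is to observe that the output set $\oplam(K)$ qualifies as a Meyer set, which is immediate from the definition of Meyer set.
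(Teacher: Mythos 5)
Your proposal is correct and follows essentially the same route as the paper, which likewise states Theorem~\ref{T4} as an immediate consequence of Lemma~\ref{L4} (the ``if'' direction being the standard fact that linear combinations of positive definite measures are Fourier transformable, and the ``moreover'' clause coming from the observation that the compact set $K$ in Lemma~\ref{L3} depends only on $W$ and not on the measure). No further comment is needed.
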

\qed

\section{Applications}

Let us first summarise all the results we got in the paper.

\begin{theorem}\label{T3}  Let $G$ be a second countable LCAG, and let $\gamma$ be a Fourier transformable measure supported inside some Meyer set $\Lambda$. Then,
\begin{itemize}
\item[(i)] $\gamma$ admits an unique decomposition into three Fourier transformable measures $\gamma_{s}, \gamma_{0s}, \gamma_{0a}$ such that
\begin{align*}
\gamma=\gamma_{s} &+ \gamma_{0s} + \gamma_{0a} \\
\reallywidehat{\gamma_s}&= \left(\widehat{\gamma}\right)_{pp}  \\
\reallywidehat{\gamma_{0s}} &=  \left(\widehat{\gamma}\right)_{sc} \\
\reallywidehat{\gamma_{0a}} &=\left(\widehat{\gamma}\right)_{ac} \,.  \\
\end{align*}
\item[(ii)]    $\gamma_{s}, \gamma_{0s}, \gamma_{0a}$ are supported inside a common Meyer set.
\item[(iii)] $\left(\widehat{\gamma}\right)_{pp}, \left(\widehat{\gamma}\right)_{sc},
\left(\widehat{\gamma}\right)_{ac}$ are norm almost periodic.
\item[(iv)]  If $(G,H, \cL)$ is any CPS and $W \subseteq H$ any compact set such that $\Lambda \subseteq \oplam(W)$, then
\begin{displaymath}
\supp(\gamma_{s}), \supp(\gamma_{0a}), \supp(\gamma_{0s}) \subseteq \oplam(W) \,.
\end{displaymath}
\item[(v)]  If $(G,H, \cL)$ is any CPS and $W \subseteq H$ any compact set such that $\Lambda \subseteq \oplam(W)$, and $h \in \cK_2(H)$ is such that $h(x)=1$ for all $x \in W$ then, there exists a finite measure $\nu$ such that
\begin{align*}
\left(\widehat{\gamma}\right)&= \left(\sum_{(y,y^\star) \in \cL^0} \widecheck{h}(y^\star) \delta_y \right)* \nu =\omega_{\widecheck{h}}*\nu \\
\left(\widehat{\gamma}\right)_{pp} &=\left(\sum_{(y,y^\star) \in \cL^0} \widecheck{h}(y^\star) \delta_y \right)* \nu_{pp}=\omega_{\widecheck{h}}*\nu_{pp} \\
\left(\widehat{\gamma}\right)_{sc} &=\left(\sum_{(y,y^\star) \in \cL^0} \widecheck{h}(y^\star) \delta_y \right)* \nu_{sc}=\omega_{\widecheck{h}}* \nu_{sc}  \\
\left(\widehat{\gamma}\right)_{ac} &=\left(\sum_{(y,y^\star) \in \cL^0} \widecheck{h}(y^\star) \delta_y \right)* \nu_{ac}=\omega_{\widecheck{h}}*\nu_{ac}  \,. \\
\end{align*}
\item[(vi)] There exists a Meyer set $\Gamma$ and four positive definite measures $\gamma_1, \gamma_2, \gamma_3, \gamma_4$ supported inside $\Gamma$ such that
\begin{displaymath}
\gamma=\gamma_1-\gamma_2+i\gamma_3 -i \gamma_4 \,.
\end{displaymath}
\end{itemize}
\end{theorem}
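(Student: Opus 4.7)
\textbf{Proof plan for Theorem~\ref{T3}.} This theorem is a compilation of the main results established in the preceding sections, so the plan is to invoke each of them in turn and verify the conclusions line up. Parts (i), (ii), and (iv) are exactly the content of Theorem~\ref{T2} (the existence, uniqueness, and support control for the generalized Eberlein decomposition applied with any CPS witness for $\Lambda$). Part (iii) is Corollary~\ref{cor1}, which in turn follows by applying Theorem~\ref{diff is NAP} to each of the three components produced by (i). Part (vi) is precisely Theorem~\ref{T4}. So the only piece that needs a small amount of fresh assembly is (v).

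For (v), the plan is as follows. Fix a CPS $(G,H,\cL)$ and a compact $W \subseteq H$ with $\Lambda \subseteq \oplam(W)$, and let $h \in \cK_2(H)$ satisfy $h \equiv 1$ on $W$. Set $\omega := \omega_h$. By exactly the same computation that proves Proposition~\ref{P1}, $\omega$ is twice Fourier transformable with
\begin{displaymath}
\widehat{\omega} \;=\; \det(\cL) \, \omega_{\widecheck{h}} \;=\; \det(\cL) \sum_{(y,y^\star) \in \cL^0} \widecheck{h}(y^\star) \delta_y,
\end{displaymath}
its support $\oplam(\{x \in H : h(x) \neq 0\})$ is a Meyer set (hence $U$-uniformly discrete for some open neighborhood $0 \in U$), and for every $x \in \Lambda$ we have $x^\star \in W$, so $\omega(\{x\}) = h(x^\star) = 1$. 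Thus $\omega$ satisfies hypotheses (a) and (b) of the Ping-Pong Lemma~\ref{L1}. Applying Lemma~\ref{L1}(i) produces a finite measure $\nu'$ on $\widehat{G}$ with $\widehat{\gamma} = \widehat{\omega} * \nu' = \det(\cL)\,\omega_{\widecheck{h}} * \nu'$; absorbing the constant by setting $\nu := \det(\cL) \nu'$ gives the first identity of (v). For the three Lebesgue-component identities, I will use the fact that $\omega_{\widecheck{h}}$ is pure point, so each of $\omega_{\widecheck{h}} * \nu_{pp}$, $\omega_{\widecheck{h}} * \nu_{ac}$, $\omega_{\widecheck{h}} * \nu_{sc}$ is pure point, absolutely continuous, and singular continuous respectively; the uniqueness of the Lebesgue decomposition then forces the identifications with $(\widehat{\gamma})_{pp}, (\widehat{\gamma})_{ac}, (\widehat{\gamma})_{sc}$. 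This is verbatim the argument used inside the proof of Theorem~\ref{T2}.

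Since each item is already available, there is no substantive obstacle; the only things to be careful about are bookkeeping the $\det(\cL)$ factor (absorbed into $\nu$), ensuring the \emph{same} finite measure $\nu$ from the single application of Lemma~\ref{L1}(i) is used across all four identities of (v) (so that $\nu_{pp}, \nu_{ac}, \nu_{sc}$ refer unambiguously to \emph{its} Lebesgue components), and verifying that the four inverse Fourier transforms in (v) coincide, on the level of supports, with those produced by Theorem~\ref{T2}, so that (v) is consistent with (iv). Once these housekeeping checks are made, the theorem is the union of its constituent parts.
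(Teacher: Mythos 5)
Your proposal is correct and matches the paper's intent exactly: Theorem~\ref{T3} is stated as a summary and left with no written proof, being precisely the union of Theorem~\ref{T2} (parts (i), (ii), (iv)), Corollary~\ref{cor1} (part (iii)), Theorem~\ref{T4} (part (vi)), and, for part (v), the argument inside the proof of Theorem~\ref{T2} specialized to $\omega=\omega_h$ with the $\det(\cL)$ factor absorbed into $\nu$. Your housekeeping remarks (a single $\nu$ serving all four identities in (v), consistency of supports with (iv)) are the right checks and raise no issues.
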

\qed

Next, we list some applications to the diffraction.

\begin{corollary}\label{cor 2} Let $\omega$  be weighted Dirac comb supported inside a Meyer set, and let $\gamma$ be an autocorrelation of $\omega$. Then each of the diffraction spectral measures $\left( \widehat{\gamma} \right)_{pp}, \left( \widehat{\gamma} \right)_{ac}$ and $\left( \widehat{\gamma} \right)_{sc}$ is norm almost periodic.
\end{corollary}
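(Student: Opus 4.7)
The plan is to reduce the statement directly to Corollary~\ref{cor1}, which handles Fourier transformable measures whose support lies inside a Meyer set. The task is therefore to verify two facts for the autocorrelation $\gamma$ of $\omega$: that $\gamma$ is Fourier transformable, and that $\supp(\gamma)$ lies inside a Meyer set.

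Fourier transformability is automatic. By construction, $\gamma$ is a vague limit of averages of measures of the form $\omega_n * \widetilde{\omega_n}$, each of which is positive definite, and positive definiteness is preserved under vague limits. Hence $\gamma$ is positive definite, and Theorem~\ref{pd is FT} then yields that $\gamma$ is Fourier transformable.

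For the support condition, I would observe that $\supp(\gamma) \subseteq \supp(\omega) - \supp(\omega) \subseteq \Lambda - \Lambda$, where $\Lambda$ is a Meyer set containing $\supp(\omega)$. By the characterization of Meyer sets recalled in Section~2, there exists a finite set $F \subseteq G$ with $\Lambda - \Lambda \subseteq \Lambda + F$. Since $\Lambda + F$ is a finite union of translates of the Meyer set $\Lambda$, it is contained in $\oplam(W')$ for the same underlying CPS and a suitably enlarged compact window $W'$; in particular, $\supp(\gamma)$ lies inside a Meyer set.

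With both conditions in place, Corollary~\ref{cor1} applies to $\gamma$ and yields that each of $\left(\widehat{\gamma}\right)_{pp}, \left(\widehat{\gamma}\right)_{ac}, \left(\widehat{\gamma}\right)_{sc}$ is norm almost periodic. There is no substantive obstacle here: the entire content lives in Theorem~\ref{diff is NAP} and Corollary~\ref{cor1}, and the reduction is essentially a one-line check using the Meyer set property $\Lambda - \Lambda \subseteq \Lambda + F$.
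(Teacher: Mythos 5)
Your proposal is correct and matches the argument the paper leaves implicit (the corollary is stated with no proof, as an immediate consequence of Corollary~\ref{cor1}): the autocorrelation is positive definite, hence Fourier transformable by Theorem~\ref{pd is FT}, and its support lies in the Meyer set $\Lambda-\Lambda$. For the support step it is slightly cleaner to note directly that $\Lambda-\Lambda\subseteq\oplam(W-W)$ with $W-W$ compact (avoiding the need to check that the finite set $F$ lies in $L$ so that $\Lambda+F$ sits inside a model set of the same CPS), but this is a cosmetic difference.
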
 \qed

\begin{corollary}\label{cor 3} Let $\omega$  be weighted Dirac comb supported inside a Meyer set, and let $\gamma$ be an autocorrelation of $\omega$. Then each of diffraction spectral measure $\left( \widehat{\gamma} \right)_{pp}, \left( \widehat{\gamma} \right)_{ac}$ and $\left( \widehat{\gamma} \right)_{sc}$ is either trivial or has relatively dense support.
\end{corollary}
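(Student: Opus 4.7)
The plan is to reduce the statement directly to Corollary~\ref{cor 2} together with the elementary observation that any nontrivial norm almost periodic measure on $G$ has relatively dense support. Since the autocorrelation $\gamma$ of a weighted Dirac comb with Meyer set support is itself a positive definite (hence Fourier transformable) measure supported inside a Meyer set, Corollary~\ref{cor 2} already tells us that each of $(\widehat{\gamma})_{pp}$, $(\widehat{\gamma})_{ac}$, $(\widehat{\gamma})_{sc}$ is norm almost periodic. So the whole corollary will follow once we establish the auxiliary fact.

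The auxiliary fact I would prove is this: if $\mu$ is a nonzero norm almost periodic measure on $G$, then $\supp(\mu)$ is relatively dense. The argument is a direct unpacking of the definition. Since $\mu \neq 0$, I can fix a compact set $K \subseteq G$ with non-empty interior such that $|\mu|(K) > 0$, and set $\epsilon := \tfrac{1}{2}|\mu|(K) > 0$. By norm almost periodicity, the set
\begin{displaymath}
P \;:=\; P_\epsilon^K(\mu) \;=\; \{\, t \in G : \|T_t\mu - \mu\|_K < \epsilon \,\}
\end{displaymath}
is relatively dense. For each $t \in P$ the triangle inequality for total variation gives $|T_t\mu|(K) \geq |\mu|(K) - |T_t\mu - \mu|(K) > \epsilon$, hence $|\mu|(K - t) > \epsilon$, so $\supp(\mu) \cap (K - t) \neq \emptyset$. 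This means $-P \subseteq \supp(\mu) + (-K)$; since $-P$ is relatively dense and $-K$ is compact, $\supp(\mu)$ is relatively dense, as claimed.

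Applying this fact to each of the three norm almost periodic measures $(\widehat{\gamma})_{pp}$, $(\widehat{\gamma})_{ac}$, $(\widehat{\gamma})_{sc}$ from Corollary~\ref{cor 2} yields the dichotomy stated in Corollary~\ref{cor 3}. I expect no serious obstacle: all the heavy lifting has been done in Theorem~\ref{diff is NAP} and its corollary, and the lemma on relative denseness of the support is a short, standard manipulation of the norm almost period set. The only small point to be careful about is getting the translation convention consistent, so that $|T_t\mu|(K) = |\mu|(K-t)$ is applied correctly when passing from the relative density of $P$ to that of $\supp(\mu)$.
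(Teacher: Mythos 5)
Your proposal is correct and matches the route the paper intends: the paper states Corollary~\ref{cor 3} without proof as an immediate consequence of the almost periodicity of the three spectral components (via Corollary~\ref{CT2}/Corollary~\ref{cor 2}), together with the standard fact that a nonzero almost periodic measure has relatively dense support. Your explicit verification of that auxiliary fact for norm almost periodic measures --- choosing $K$ with $|\mu|(K)>0$, taking $\epsilon=\tfrac12|\mu|(K)$, and using $|T_t\mu|(K)\geq|\mu|(K)-|T_t\mu-\mu|(K)$ --- is sound and simply fills in the detail the paper leaves implicit.
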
\qed

\begin{remark} In \cite{NS1} we have proven that given a Meyer set $\Lambda$, each of $\left( \widehat{\gamma} \right)_{pp}, \left( \widehat{\gamma} \right)_{c}$ of the diffraction spectra of $\Lambda$ is strong almost periodic and hence, it is either trivial or has relatively dense support. These results are an immediate consequence of Corollary~\ref{cor 2} and Corollary~\ref{cor 3}.
\end{remark}

\begin{remark} In \cite{NS2} we have proven that given a be weighted Dirac comb supported inside a Meyer set, its pure point spectrum $\left( \widehat{\gamma} \right)_{pp}$ is a sup-almost periodic measure (see \cite[Sect.~5.3]{NS11} for the definition and properties of sup-almost periodicity) and that, for each $0< a < \widehat{\gamma}(\{ 0 \})$ the set
\begin{displaymath}
\mathcal{I}(a):= \{ \chi \in \widehat{G} | \widehat{\gamma}(\{ \chi \}) \geq 0 \}
\end{displaymath}
is a Meyer set.

Both of these are immediate consequences of Corollary~\ref{cor 2} and some results of \cite{NS11}. Indeed, since $\left( \widehat{\gamma} \right)_{pp}$ is norm-almost periodic, it is sup-almost periodic by \cite[Lemma~5.3.6]{NS11}. Next, since $\left( \widehat{\gamma} \right)_{pp}$ is a sup-almost periodic pure point measure, by \cite[Thm.~5.2]{NS11} there exists a CPS $(\widehat{G}, \widehat{H}, \cL)$ and some $h \in \Cz(\widehat{H})$ such that
\begin{displaymath}
\left( \widehat{\gamma} \right)_{pp} =\omega_h \,.
\end{displaymath}
Since $\gamma$ is positive definite, $\widehat{\gamma}$ and hence $\left( \widehat{\gamma} \right)_{pp}$ is positive. Then $h$ is also positive, by the denseness of the second projection.

Therefore
\begin{displaymath}
\mathcal{I}(a)= \oplam(W_a) \,,
\end{displaymath}
where $W_a:=\{ y \in \widehat{H} | h(y) \geq a \}$ is compact and has non-empty interior, as $h \in \Cz(\widehat{H})$ and $h \nequiv 0$.

It follows that for all $0 <a < \widehat{\gamma}(\{ 0 \})$ the sets $\mathcal{I}(a)$ are model sets in the same CPS.
\end{remark}

\begin{corollary} Let $\omega$  be weighted Dirac comb supported inside a Meyer set, and let $\gamma$ be an autocorrelation of $\omega$. Then, for each pre-compact Borel set $B \subseteq \widehat{G}$ there exists a relatively dense set $P \subseteq \widehat{G}$ such that, for all $t \in P$ we have
\begin{align*}
  \left| \widehat{\gamma}(t+B)- \widehat{\gamma}(B) \right|  &<\epsilon  \\
  \left| \left( \widehat{\gamma} \right)_{pp}(t+B)- \left( \widehat{\gamma} \right)_{pp}(B) \right|  &<\epsilon  \\
  \left| \left( \widehat{\gamma} \right)_{ac}(t+B)- \left( \widehat{\gamma} \right)_{ac}(B) \right|  &<\epsilon  \\
  \left| \left( \widehat{\gamma} \right)_{sc}(t+B)- \left( \widehat{\gamma} \right)_{sc}(B) \right|  &<\epsilon  \\
\end{align*}
\end{corollary}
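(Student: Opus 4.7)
The plan is to deduce the corollary directly from the norm almost periodicity of $\widehat{\gamma}$ together with its three Lebesgue components, established in Corollary~\ref{cor 2}. (I am reading the statement as quantifying also over $\epsilon > 0$, which appears in the conclusion but not in the hypothesis.) The key observation is that for any complex Radon measure $\nu$ and any Borel set $B$ contained in a compact $K$ with non-empty interior, $|\nu(B)| \leq |\nu|(B) \leq |\nu|(K) \leq \|\nu\|_K$. Applied with $\nu = T_t \mu - \mu$, this means that every $\epsilon$-norm almost period $t$ of $\mu$ with respect to $K$ automatically satisfies $|\mu(t+B) - \mu(B)| < \epsilon$ for every Borel $B \subseteq K$; symmetry of the set of $\epsilon$-norm almost periods lets us freely replace $t$ by $-t$ if the translation convention requires it.

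Given $\epsilon > 0$ and the precompact Borel set $B \subseteq \widehat{G}$, I would first pick a compact set $K \subseteq \widehat{G}$ with non-empty interior such that $B \subseteq K$. By Corollary~\ref{cor 2} each of $\widehat{\gamma}$, $(\widehat{\gamma})_{pp}$, $(\widehat{\gamma})_{ac}$, $(\widehat{\gamma})_{sc}$ is norm almost periodic, so for each of them the set of $\epsilon$-norm almost periods with respect to $K$ is relatively dense. It then suffices to take $P$ to be a relatively dense set of simultaneous $\epsilon$-norm almost periods of all four measures; the transfer principle in the previous paragraph then delivers the four desired inequalities at once.

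The main obstacle is thus showing that common $\epsilon$-norm almost periods of finitely many norm almost periodic measures form a relatively dense set. I would prove this by the standard hull argument: for each norm almost periodic $\mu_i$ the orbit $\{T_t \mu_i : t \in G\}$ is precompact in $(\mathcal{M}^\infty(\widehat{G}), \|\cdot\|_K)$ (a pigeon-hole argument against the relatively dense set of $\epsilon$-almost periods covers the orbit by finitely many small norm-balls); the diagonal orbit in the product is then precompact in the product norm, and one more pigeon-hole argument produces a relatively dense set of $\epsilon$-almost periods common to all four measures. Alternatively, since $\widehat{\gamma} = (\widehat{\gamma})_{pp} + (\widehat{\gamma})_{ac} + (\widehat{\gamma})_{sc}$, one can perform the argument with $\epsilon/3$ in place of $\epsilon$ on only the three Lebesgue components, and the estimate on $\widehat{\gamma}$ follows from the triangle inequality for the total variation norm. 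Either way, the corollary follows.
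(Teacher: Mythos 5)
Your first step is exactly the paper's: pick a compact $K$ with non-empty interior containing $B$ and use $\left|\nu(B)\right| \leq \left|\nu\right|(K) \leq \|\nu\|_K$ with $\nu = T_t\mu-\mu$ to transfer norm almost periods into the desired inequalities. The gap is in how you produce a relatively dense set of \emph{common} $\epsilon$-norm almost periods of the four measures. Your hull argument does not work: the map $t \mapsto T_t\mu$ is not continuous for the norm $\|\cdot\|_K$ (already $\|T_t\delta_0-\delta_0\|_K = 2$ for every small $t \neq 0$, and $\|T_t\delta_{\ZZ}-\delta_{\ZZ}\|_{[0,1]}=2$ for all $t \notin \ZZ$ even though $\delta_{\ZZ}$ is norm almost periodic). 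Consequently the set $\{T_k\mu : k \in K_0\}$ over a compact $K_0$ need not be totally bounded, the orbit of a norm almost periodic measure need not be precompact, and the pigeon-hole covering by finitely many norm-balls fails. Bochner-type equivalences between relatively dense almost periods and precompact orbits require continuity of the translation action, which is exactly what is missing here. Your fallback of running the argument with $\epsilon/3$ on the three Lebesgue components does not repair this, since it still presupposes a relatively dense set of \emph{simultaneous} almost periods of three measures.

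The paper avoids the simultaneity problem altogether. By \cite[Thm.~14.22]{HeRo} one has $|\mu|=|\mu_{\mathsf{pp}}|+|\mu_{\mathsf{ac}}|+|\mu_{\mathsf{sc}}|$, hence $\|\mu_\alpha\|_K \leq \|\mu\|_K$ for each $\alpha \in \{\mathsf{pp},\mathsf{ac},\mathsf{sc}\}$. Applying this to $\mu = T_t\widehat{\gamma}-\widehat{\gamma}$ (whose Lebesgue components are $T_t(\widehat{\gamma})_\alpha-(\widehat{\gamma})_\alpha$, since the decomposition commutes with translation) gives
\begin{displaymath}
P_\epsilon^K(\widehat{\gamma}) \subseteq P_\epsilon^K\bigl((\widehat{\gamma})_\alpha\bigr) \,,
\end{displaymath}
so the relatively dense set $P_\epsilon^K(\widehat{\gamma})$ is automatically a common set of $\epsilon$-norm almost periods for all four measures, and no precompactness or product argument is needed. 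If you insert this one-line observation in place of your hull argument, the rest of your proof goes through and coincides with the paper's.
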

\begin{proof}

Let $K$ be any compact set with non-empty interior such that $B \subseteq K$.

Let us first observe that if $\mu$ is a norm almost periodic measure, and $\epsilon >0$, then the set of common almost periods of $\mu, \mu_{pp}, \mu_{ac}$ and $\mu_{sc}$ is relatively dense.

Indeed, by \cite[Thm.~14.22]{HeRo}, we have
\begin{displaymath}
|\mu|=|\mu_{\mathsf{pp}}|+| \mu|_{\mathsf{ac}}+|\mu|_{\mathsf{sc}} \,.
\end{displaymath}

It follows immediately from here that, for each $\alpha \in \{\mathsf{pp}, \mathsf{ac}, \mathsf{sc}\}$ we have
\begin{displaymath}
\| \mu_{\alpha} \|_{K} \leq\| \mu \|_K  \leq \| \mu_{\mathsf{pp}} \|_K+\| \mu_{\mathsf{ac}} \|_K+\| \mu_{\mathsf{sc}} \|_K \,. \\
\end{displaymath}

This implies that, for each $\alpha \in \{\mathsf{pp}, \mathsf{ac}, \mathsf{sc}\}$ we have
\begin{displaymath}
P_{\epsilon}^K ( \mu) \subseteq P_{\epsilon}^K ( \mu_{\alpha}) \,.
\end{displaymath}

Thus, as $\mu$ is norm almost periodic, the relatively dense set $P_{\epsilon}^K ( \mu)$ is a common set of $\epsilon$ norm-almost periods for $\mu, \mu_{pp}, \mu_{ac}$ and $\mu_{sc}$.

Now, for each $-t$ in this relatively dense set we have
\begin{displaymath}
 \left| \widehat{\gamma}(t+B)- \widehat{\gamma}(B) \right| \leq  \left| T_{-t}\widehat{\gamma}- \widehat{\gamma} \right|(B) \leq  \left| T_{-t}\widehat{\gamma}- \widehat{\gamma} \right|(K) \leq  \| T_{-t}\widehat{\gamma}- \widehat{\gamma} \|_{K} < \epsilon \,,
\end{displaymath}
and similarly,
\begin{displaymath}
 \left| \left(\widehat{\gamma}\right)_{\alpha}(t+B)- \left(\widehat{\gamma}\right)_{\alpha}(B) \right| \leq   \| T_{-t}\left(\widehat{\gamma}\right)_{\alpha}- \left(\widehat{\gamma}\right)_{\alpha}\|_{K} < \epsilon \qquad \forall \alpha \in \{\mathsf{pp}, \mathsf{ac}, \mathsf{sc}\} \,.
\end{displaymath}
\end{proof}

\subsection*{Acknowledgments} We are grateful to Christoph Richard, Franz G\"{a}hler and Michael Baake for many insightful discussions which inspired this manuscript. The manuscript was completed when the author visited Bielefeld University, and the author is grateful to the mathematics department for hospitality. This research stay was partially supported by the Simons Foundation and by the Mathematisches Forschungsinstitut Oberwolfach. The work was partially supported by NSERC with grant 03762-2014. We are greatly thankful for all the support.

\end{document}